\newif\ificml
\newif\ifaccepted
\pgfplotsset{compat=1.17}
\definecolor{darkgreen}{rgb}{0.0, 0.5, 0.0}
\definecolor{orange}{rgb}{1.0, 0.49, 0.0}
\definecolor{modified}{rgb}{0.0, 0.0, 1.0}
\definecolor{cr}{rgb}{0.0, 0.0, 1.0}
\theoremstyle{plain}
\newtheorem{theorem}{Theorem}[section]
\newtheorem{no-lemma}[theorem]{Lemma}
\theoremstyle{definition}
\newtheorem{definition}[theorem]{Definition}
\theoremstyle{remark}
\icmltitlerunning{Benchmarking Quantum Reinforcement Learning}
\begin{document}

\newacronym{vqc}{VQC}{variational quantum circuit}
\newacronym{rl}{RL}{reinforcement learning}
\newacronym{qrl}{QRL}{quantum reinforcement learning}
\newacronym{ml}{ML}{machine learning}
\newacronym{qml}{QML}{quantum machine learning}
\newacronym{dnn}{DNN}{deep neural network}
\newacronym{qc}{QC}{quantum computing}
\newacronym{qnn}{QNN}{quantum neural network}
\newacronym{ddqn}{DDQN}{double deep Q-networks}
\newacronym{nisq}{NISQ}{noisy intermediate-scale quantum}
\newacronym{pg}{PG}{policy gradient}
\newacronym{qpg}{QPG}{quantum policy gradient}
\newacronym{ppo}{PPO}{proximal policy optimization}
\newacronym{mdp}{MDP}{Markov Decision Process}
\newacronym{spsa}{SPSA}{simultaneous perturbation stochastic approximations}
\newacronym{fim}{FIM}{Fisher information matrix}
\newacronym{pdf}{PDF}{probability density function}

\newpage

\twocolumn[

\icmltitle{Benchmarking Quantum Reinforcement Learning}
% It is OKAY to include author information, even for blind
% submissions: the style file will automatically remove it for you
% unless you've provided the [accepted] option to the icml2022
% package.

% List of affiliations: The first argument should be a (short)
% identifier you will use later to specify author affiliations
% Academic affiliations should list Department, University, City, Region, Country
% Industry affiliations should list Company, City, Region, Country

% You can specify symbols, otherwise they are numbered in order.
% Ideally, you should not use this facility. Affiliations will be numbered
% in order of appearance and this is the preferred way.
\icmlsetsymbol{equal}{*}

\begin{icmlauthorlist}
\icmlauthor{Nico Meyer}{equal,iis,prl}
\icmlauthor{Christian Ufrecht}{equal,iis}
\icmlauthor{George Yammine}{iis}
\icmlauthor{Georgios Kontes}{iis}
\icmlauthor{Christopher Mutschler}{iis}
\icmlauthor{Daniel D. Scherer}{iis}
\end{icmlauthorlist}

\icmlaffiliation{iis}{Fraunhofer IIS, Fraunhofer Institute for Integrated Circuits IIS, N\"urnberg, Germany}

\icmlaffiliation{prl}{Pattern Recognition Lab, Friedrich-Alexander-Universit\"at Erlangen-N\"urnberg, Erlangen, Germany}

\icmlcorrespondingauthor{Nico Meyer}{nico.meyer@iis.fraunhofer.de}

% You may provide any keywords that you
% find helpful for describing your paper; these are used to populate
% the "keywords" metadata in the PDF but will not be shown in the document
\icmlkeywords{Quantum Computing, Reinforcement Learning}

\vskip 0.3in
]

% this must go after the closing bracket ] following \twocolumn[ ...

% This command actually creates the footnote in the first column
% listing the affiliations and the copyright notice.
% The command takes one argument, which is text to display at the start of the footnote.
% The \icmlEqualContribution command is standard text for equal contribution.
% Remove it (just {}) if you do not need this facility.

% \printAffiliationsAndNotice{}  % leave blank if no need to mention equal contribution
\printAffiliationsAndNotice{\icmlEqualContribution} % otherwise use the standard text.

\begin{abstract}
Benchmarking and establishing proper statistical validation metrics for reinforcement learning (RL) remain ongoing challenges, where no consensus has been established yet. The emergence of quantum computing and its potential applications in quantum reinforcement learning (QRL) further complicate benchmarking efforts. To enable valid performance comparisons and to streamline current research in this area, we propose a novel benchmarking methodology, which is based on a statistical estimator for sample complexity and a definition of statistical outperformance. Furthermore, considering QRL, our methodology casts doubt on some previous claims regarding its superiority. We conducted experiments on a novel benchmarking environment with flexible levels of complexity. While we still identify possible advantages, our findings are more nuanced overall. We discuss the potential limitations of these results and explore their implications for empirical research on quantum advantage in QRL.
\end{abstract}

\section{\label{sec:introduction} Introduction}

\Gls{rl} is a powerful algorithmic primitive increasingly applied across multiple domains \cite{arulkumaran2017deep, Francois2018}.
\Gls{qrl} \cite{meyer2022survey} is a collection of \gls{rl} algorithms developed for quantum computers, an emergent paradigm of computing that exploits the laws of quantum mechanics \cite{Nielsen2010}. While some \gls{qrl} algorithms with provable advantage over traditional (classical) algorithms have been proposed \cite{Wang2021,Cherrat2023,Dunjko2016}, these algorithms are currently far out of reach for any existing quantum computing hardware. Hence, most of the work has focused on hybrid algorithms in which the deep neural network of a traditional \gls{rl} algorithm -- either approximating the policy, the value function, or both -- is replaced by a \gls{vqc} \cite{Bharti2022, chen2020variational}. However, they (while being less hardware-intensive) are heuristic in nature. Hence, there is no proof of intrinsic advantage of the quantum version of the algorithm.

\begin{figure}[t]
    \centering
    \includegraphics{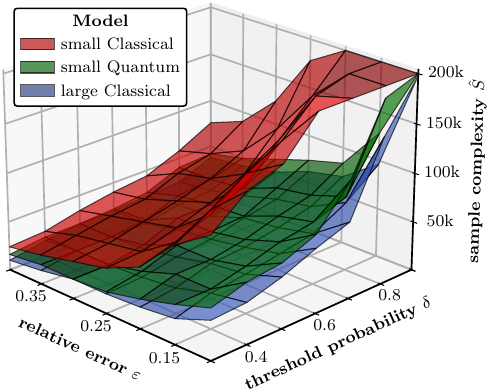}
    \caption{\label{fig:figone}Comparison of empirical sample complexities $\hat{S}$ of double deep Q-learning and a quantum version of the algorithm (lower is better). Sample complexity is the number of environment-agent interactions to surpass a performance threshold $1-\varepsilon$ with probability $\delta$. The figure shows the result for the \texttt{BeamManagement6G} environment introduced in this work. In order of decreasing sample complexity: a \emph{small classical} neural network with $2$ hidden layers of width $16$, i.e., $387$ parameters; a \emph{small quantum} circuit with $4$ layers on $14$ qubits, i.e., $437$ variational parameters, integrated between fully connected classical layers with additional $101$ parameters; a \emph{large classical} neural network with $2$ hidden layers of width $64$, i.e., $4611$ parameters; The hybrid quantum model consistently outperforms the similar-sized classical network, and is also competitive with the $10$-fold larger classical model.
    }
\end{figure}

In the search for \textit{heuristic quantum advantage} in \gls{qrl}, we encounter issues similar to classical \gls{rl}: sensitivity to hyperparameter choices, various randomness sources, and even random seed and codebase \cite{Henderson2018,jordan2024position}. \gls{qrl} faces similar, if not more severe, reproducibility issues than traditional \gls{rl} \cite{Bowles2024,franz2023uncovering} due to additional randomness. Therefore, any claim of \gls{qrl}'s superiority over classical \gls{rl} should be taken with great care.

\textit{ ``How do we meaningfully assess if a \gls{qrl} agent outperforms its classical counterpart and what does \textit{outperform} mean in the context of quantum advantage?''} We adopt \textit{sample complexity}, i.e., the number of interactions between the agent and the environment to achieve a certain performance \cite{Kearns1999, Kakade2003}, as the central benchmarking metric due to its inherent costly implications in real-world applications. As classical \gls{rl} is notorious for its sample inefficiency \cite{Francois2018}, a potential quantum advantage in sample complexity presents an intriguing prospect of QRL.

In addition to common sources of randomness known in \gls{rl} \cite{Henderson2018}, such as random weight initialization, randomness in the environment, etc., \gls{qrl} is subject to additional sources of randomness such as shot noise or hardware imperfections due to current limitations. Therefore, performance comparisons need to be based on a sound statistical evaluation. However, most studies perform only a small, potentially insufficient number of training runs for robust inferences under stochasticity. As a result, statements are potentially misleading or insignificant. \cref{fig:sampling_compelxity_few_runs} exemplifies the comparison of two \gls{rl} algorithms w.r.t\ some threshold on the evaluated return. It illustrates common flaws prevalent in the \gls{qrl} literature, e.g., averaging learning curves over only 5 seeds, inconsistency in statistical ranges, etc. \cref{fig:sampling_compelxity_few_runs} indicates that algorithm 1 is more sample-efficient. However, the interquartile ranges (shaded areas) estimated with a much larger sample size of $100$ runs rather support the opposite statement.

This paper strongly advocates robust stochastic modeling backed by significance testing to meet reproducibility criteria \cite{Henderson2018}. The main contributions of this paper are as follows. (1) We propose a formal evaluation procedure by a statistical estimator for sample complexity, complemented by a robust notion of outperformance based on statistical significance. Importantly, both are designed for the benchmarking of heuristic algorithms. (2) We design and implement a fast and flexible benchmarking suite based on a problem inspired by real-world wireless communication tasks. While community benchmarks \cite{brockman2016openai,Tassa2018} are widely used, benchmarking quantum algorithms requires the option to flexibly scale difficulty and instance size of the task to allow extrapolation beyond classical simulatability of quantum algorithms. The task we introduce belongs to a potentially significant problem class for quantum computing, characterized by small input and output dimensions of classical data \cite{Hoefler2023}. Therefore, we opt for a new benchmark environment to better suit the specific needs of quantum algorithm evaluation. (3) We perform the most extensive computational analysis of classical \gls{rl} vs.~\gls{qrl} done so far to the authors' knowledge. In our study, we compare different problem and algorithm configurations of different scales and complexity. Statistically robust results are obtained by performing 100 training runs per configuration, by far the largest population size found in the quantum computing literature.
\begin{figure}[t!]
    \centering
    \includegraphics[width=0.9\linewidth]{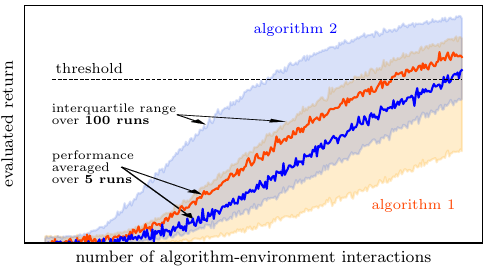}    \caption{\label{fig:sampling_compelxity_few_runs} Inadequate reporting of two (Q)RL agents' performance can lead to false conclusions about sampling complexity. Although the curves may seem exaggerated, it is common practice in QRL studies to benchmark with such a limited number of runs.}
\end{figure}

The remainder of this paper is structured as follows. \cref{sec:rw} reviews related work. \cref{sec:rl} provides background on (Quantum) RL. Next, \cref{sec:sample_complexity} introduces our statistical estimator. \cref{sec:exp-setup} describes the experimental setup and introduces a novel environment with flexible complexity. \cref{sec:experiments} evaluates \gls{qrl} using our statistical estimator. We discuss the implications on quantum advantage in \cref{sec:discussion}.

\section{Related Work} 
\label{sec:rw}

For classical deep RL, in general, no rigorous analysis of algorithms is possible, necessitating a computational approach to algorithmic comparison and representative benchmarking environments~\cite{brockman2016openai, todorov2012mujoco, tassa2018deepmind} for mature implementations of widely-used deep RL algorithms~\cite{stable-baselines, duan2016benchmarking, fujimoto2019benchmarking, wang2019benchmarking}. 
However, benchmarking results are heavily influenced by (sometimes seemingly trivial) implementation details~\cite{engstrom2019implementation, andrychowicz2021matters, huang202237}, as well as randomness in environment transitions and network initialization~\cite{Henderson2018}. 
Research has focused on determining the necessary number of seeds for robust comparison~\cite{agarwal2021deep, colas2018many} and on adopting statistical analysis over simple point estimates~\cite{agarwal2021deep, colas2018many, patterson2023empirical,huang2024open} (e.g., reporting the interquartile mean instead of the average performance). 
However, benchmarking algorithms with increased number of seeds~\cite{laskin2021urlb, gorsane2022towards, bettini2024benchmarl} can quickly become a bottleneck~\cite{jordan2024position}, especially for computationally-intense algorithms, like model-based~\cite{wang2019benchmarking} or safe RL~\cite{zhao2023guard}. In addition,~\cite{jordan2020evaluating} emphasizes that standard evaluations often overlook the difficulty of hyperparameter tuning, which can obscure the true \emph{usability} of an algorithm in real-world applications.

QRL can solve artificial tasks (without practical relevance) with exponentially smaller sample complexity compared to any known classical algorithm. Recently, \cite{jerbi2021parametrized, liu2021rigorous} constructed artificial problems which are widely believed to be classically intractable. When the type of the algorithm is fixed to e.g.~policy iteration, quantum versions exist with polynomially reduced sample complexity \cite{Wang2021, Ganguly_2023a, Zhong_2023, wiedemann2023quantum}.
However, these algorithms require resources that far exceed the capabilities of current quantum hardware.
An alternative line of research \cite{chen2020variational, skolik2022quantum, Lockwood_2020}, more aligned with the limitations of current hardware, substitutes the classical neural network used as a function approximator for policy and value function in classical \gls{rl} algorithms by a variational quantum circuit \gls{vqc} \cite{Bharti2022}.
Here, \citet{jerbi2021parametrized} construct artificial problems based on the same \gls{vqc} architecture later used by the learner. The superiority found empirically for the \gls{qrl} algorithm can then be attributed to the inductive bias introduced in the problem.
Several studies have compared classical \gls{rl} and \gls{qrl} on toy problems inspired by real-world tasks. Some of these studies~\cite{chen2020variational, Dragan2024, reers2023towards, Hohenfeld2024, eisenmann2024model} have documented empirical superiority of \gls{qrl} on metrics closely related to sample complexity. This work proposes a robust methodology to compare these types of algorithms.

\section{\label{sec:rl}(Quantum) Reinforcement Learning}

\Gls{rl} is a framework for solving complex time-dependent decision-making problems. It relies on a Markov Decision Process (MDP), represented as a $5$-tuple ($\mathcal{S}, \mathcal{A}, R, p, \gamma)$, where $\mathcal{S}$ is the set of states and $\mathcal{A}$ is the set of actions. The reward function $R: \mathcal{S} \times \mathcal{A} \times \mathcal{S} \mapsto \mathbb{R}$ assigns a scalar value to performing action $a$ in state $s$ and transitioning to state $s'$. The dynamics are governed by  $p: \mathcal{S} \times \mathcal{S} \times \mathcal{A} \mapsto \left[ 0, 1\right]$, which gives the probability of transitioning from state $s$ to state $s'$ after taking action $a$. The discount factor $\gamma$, ranging between $0$ and $1$, determines the importance of immediate versus future rewards. The objective is to find a policy $\pi(s)=a$ that maximizes the discounted long-term reward $G_t \gets \sum_{t'=t}^{\infty} \gamma^{t'-t} r_{t'}$ from time step $t$ onwards. With the state-action value function $Q_{\pi} (s,a) := \mathbb{E}_{\pi} \left[ G_t | s_t=s, a_t=a \right]$, the optimal policy is given by $\pi^{*}(s)=\mathrm{argmax}_a Q^{*}(s,a)$. The optimal Q-value function is the unique solution to the Bellman optimality equation $Q^{*}(s,a) = \sum_{s'} p(s'| s, a) \left[ R(s,a,s') + \gamma \cdot \max_{a'}~Q^{*}(s',a') \right]$, for all $s \in \mathcal{S},~a \in \mathcal{A}$~\cite{sutton2018reinforcement}.

\begin{figure}[t!]
    \centering
    \input{figures/qnn}
    \caption{\label{fig:qnn}Hybrid classical-quantum neural network for an exemplary $4$-qubit quantum layer. The dimensionality of the observation is mapped to the number of qubits in the quantum circuit using a fully-connected layer. The \glsentrylong{vqc} -- for details on the ansatz see~\cref{fig:ansatz} in \cref{app:model} -- acts as a hidden layer, and all qubits are measured individually in the Pauli-Z basis. These measurement results are post-processed using another fully-connected layer, mapping to the number of actions.}
\end{figure}

The state-action value function is typically represented using some type of function approximator, as tabular approaches are only suitable for small problem instances. Reinforcement learning based on classical \glspl{dnn} has been first successfully realized in the deep Q-networks (DQN) algorithm~\cite{mnih2015human}. In our work, we employ an extension of this concept, i.e.\ \gls{ddqn}~\cite{van2016deep}. This algorithm employs function approximators $Q_{\theta}, Q_{\theta^{\prime}}$, and performs updates of the parameters towards following loss:
\begin{align*}
    \mathcal{L}(\theta) = \left[ r + \gamma \cdot Q_{\theta}(s', \mathrm{argmax}_{b}Q_{\theta^{\prime}}(s', b)) - Q_{\theta}(s, a) \right]^2,
\end{align*}
where target parameters $\theta^{\prime}$ are synchronized after a hyperparameter-dependent number of update steps.

It is also possible to additionally parameterize the policy (i.e.\ the actor), in addition to the value function (i.e.\ the critic), and train the respective function approximators with \gls{ppo}~\cite{schulman2017proximal}. Details on both the algorithm and the hyperparameter tuning performed can be found in \cref{app:algorithm}.

We compare two function approximation approaches in these algorithms: (i) the standard version with a classical \gls{dnn} and (ii) the quantum version, which has a \glsentrylong{vqc} between two small classical fully-connected layers. This structure is called \textit{hybrid classical-quantum}, as depicted in \cref{fig:qnn}. As discussed in \cref{subapp:ablation}, the classical layers have few trainable parameters, and small fully classical \glspl{dnn} are not competitive, so the hybrid models' performance originates from the quantum sub-module. The \gls{vqc} acts like a quantum counterpart to a classical \gls{dnn}, with trainable weights parameterizing unitary operations, usually single-qubit rotations~\cite{Bharti2022}. The quantum state is measured to approximate complex functions. Past research suggests \glspl{vqc} may have advantages over \glspl{dnn}, including better accuracy for certain tasks~\cite{liu2021rigorous} and smaller model size~\cite{chen2020variational}. More details on \glspl{vqc} and the ansatz used here are shown in \cref{subapp:quantum}. Such models have been used in value-based~\cite{chen2020variational,skolik2022quantum} and policy-based~\cite{jerbi2021parametrized,meyer2023quantum} \gls{rl} routines.

\section{\label{sec:sample_complexity}Sample Complexity Estimator}
The sample complexity $S$ of an \gls{rl} algorithm is the number of samples $s^\prime \sim p(\cdot|s,a)$ required to meet a specified performance criterion \cite{Kearns1999, Kakade2003} with high probability. When $S$ is expressed by relevant problem parameters such as state- and action-space size, discount factor etc., it serves as an effective instrument to compare different learning algorithms with performance guarantees \cite{lazaric2012finite, lattimore2013sample, liu2024optimistic}.
However, assessing sample complexity of heuristic RL algorithms requires empirical evaluation procedures. In particular, as illustrated in \cref{fig:sampling_complexity_ilustration_main_part}, sample complexity for algorithms without performance guarantees has to be defined with respect to a given threshold $V^*$.
\begin{figure}[t]
    \centering\includegraphics{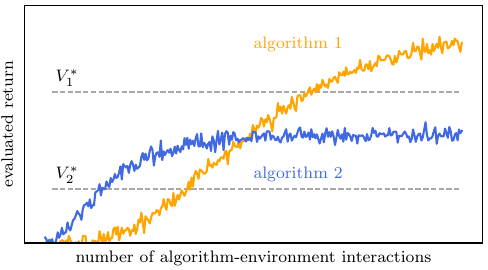}
    \caption{\label{fig:sampling_complexity_ilustration_main_part}The figure exemplarily shows two learning curves generated by two different algorithms or algorithmic settings (algorithm 1 and algorithm 2). While algorithm 2 exhibits lower sample complexity with respect to threshold $V_2^*$ than algorithm 1 (for this particular training run), the converse is true for threshold $V_1^*$, which algorithm 2 may even never reach. Consequently, if convergence to optimality cannot be proved for the algorithm, sample complexity is well defined only with respect to a given threshold.}
\end{figure}
In the following, we introduce a statistical estimator for sample complexity. 
To this end, for a given algorithm, we view each training run as the realization of a stochastic process $\{V_t, t=1,...,T\}$. In our setting, we choose $V_t$ to be the evaluated expected return. In case of our environment, the maximal value of $V_t$ can be calculated, thus without loss of generality, we constrain $V_t\in [0,1]$. We generalize the definition of the estimator in \cref{sec:estimator} to environments for which the optimal value of the expected return is unknown. We fix $\delta \in (0,1]$ and $\varepsilon \in [0,1]$, and define sample complexity as  
\begin{equation}
    S=\sum_{t=1}^T \mathbb{I}\left[P_t<\delta  \right],
\end{equation}
where $\mathbb{I}[\cdot]$ is the indicator function which is $1$ exactly if the probability $P_t=P(V_t\geq 1-\varepsilon)$ is smaller than $\delta$ and otherwise $0$. In words, $P_t$ is the probability that the algorithm performs better than the threshold $1-\varepsilon$ at time step $t$.  Since $P_t$ is unknown, we model $N$ training runs by a collection of stochastic processes $\{V_t^{(i)}, t=1,...,T\}$ with $i=1,...,N$, where $V_t^{(i)}$ i.i.d. for given $t$. Next, we replace $P_t$ by its unbiased estimator
\begin{equation}
\label{empirical probability}
    \hat{P}_t=\frac{1}{N}\sum_{i=1}^N\mathbb{I}\big[V_t^{(i)}\geq1-\varepsilon\big]\,.
\end{equation} 
Now the empirical sample complexity can be defined as:
\begin{definition}[Estimator empirical sample complexity]
\label{EstimatorDefinition}
Given $N$ training runs $\{V_t^{(i)}, t=1,...,T\}$ with $i=1,...,N$ and $V_t^{(i)}$ i.i.d. for given $t$, a probability threshold $\delta \in (0,1]$ and a performance threshold value $\varepsilon \in [0,1]$, we call 
\begin{equation}
\label{Estimator_main_text}
    \hat{S}=\sum_{t=1}^T \mathbb{I}\big[\hat{P}_t<\delta \big]\,,
\end{equation}
where $\hat{P}_t$ is defined in \cref{empirical probability},
the \textit{empirical sample complexity}.
\end{definition}
\cref{sec:estimator} provides more details on the intuition of this definition. Additionally, we prove consistency, that is  $\lim _{N\to \infty }\ P\big (|\hat{S}-S |>\eta \big )=0$ for $\eta>0$. Moreover, using the central-limit theorem, it is shown that $\hat{S}$ is asymptotically unbiased. Based on the analysis in the appendix we choose $N=100$ throughout this work. To the author's knowledge, this number far exceeds the population size used in any other study on \gls{qrl} benchmarking. Based on \cref{EstimatorDefinition} we define:
\begin{definition}[Significant Outperformance]
\label{SignificantOutperformance}
    We say that algorithm 1 outperforms algorithm 2 [on a task], if (i) it has significantly lower sample complexity (with respect to a definition of significance) for some error threshold $\varepsilon$ and probability threshold $\delta$. (ii) Algorithm 1 must not have a significantly higher sample complexity than Algorithm 2 for any  $\varepsilon$-$\delta$-configuration.
\end{definition}
In this work, we consider a difference in sample complexity to be significant, if the respective $5$th and $95$th percentile ranges do not overlap (i.e., extended interdecile ranges~\cite{degroot2005optimal}). 
To guarantee robustness, we perform $100$ runs for each setup and use cluster re-sampling~\cite{Cameron2008} for estimating the quantiles.

\section{\label{sec:exp-setup}Experimental Setup}

As \gls{qrl} algorithms exhibit some robustness against hardware noise \cite{skolik2023robustness} (which is expected to further decrease in the future~\cite{kim2023evidence,acharya2024quantum}) we decided to perform all experiments in a noise-free simulation. We reduced the impact of parameter initialization by considering $100$ random seeds per environment and model configuration. We initialized the classical neural networks using He initialization and the \gls{vqc} parameters uniformly at random within $\left[0, 2\pi \right]$.
We tuned the hyperparameters, see \cref{app:algorithm}, before running the actual experiments. The configurations of the models are reported in terms of three metrics: (i) model \emph{width}, i.e., the number of neurons in hidden layers for classical \glspl{dnn}, and in case of quantum architectures the number of qubits; (ii) model \emph{depth}, i.e., the number of hidden layers for classical networks, and the number of layers of the quantum models; and (iii) model complexity, parameterized by the number of trainable parameters, which will be justified further below.
The reported configurations were chosen by an extensive ablation study, see~\cref{app:model}.

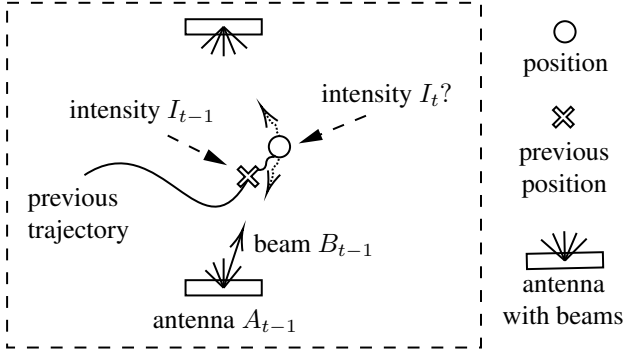
\begin{figure}[t!]
    \centering
    \tikzset{every picture/.style={line width=0.75pt}} %set default line width to 0.75pt        

\begin{tikzpicture}[x=0.75pt,y=0.75pt,yscale=-1,xscale=1]
%uncomment if require: \path (0,300); %set diagram left start at 0, and has height of 300

%Shape: Rectangle [id:dp2845847231833707] 
\draw   (161,53.29) -- (199.43,53.29) -- (199.43,61) -- (161,61) -- cycle ;
%Shape: Rectangle [id:dp42428269021429377] 
\draw   (161,185) -- (199.43,185) -- (199.43,192.71) -- (161,192.71) -- cycle ;
%Curve Lines [id:da00698263651759401] 
\draw    (110,132) .. controls (149,106) and (169,176) .. (192.43,134.29) ;
%Shape: Circle [id:dp8934745790866467] 
\draw   (203,117.64) .. controls (203,114.68) and (205.4,112.29) .. (208.36,112.29) .. controls (211.32,112.29) and (213.71,114.68) .. (213.71,117.64) .. controls (213.71,120.6) and (211.32,123) .. (208.36,123) .. controls (205.4,123) and (203,120.6) .. (203,117.64) -- cycle ;
%Straight Lines [id:da25092802974207995] 
\draw    (194.1,132.62) .. controls (194.27,130.27) and (195.52,129.18) .. (197.87,129.35) .. controls (200.22,129.52) and (201.48,128.42) .. (201.65,126.07) .. controls (201.82,123.72) and (203.08,122.63) .. (205.43,122.8) -- (206.36,122) -- (206.36,122) ;
%Straight Lines [id:da08719620850675902] 
\draw    (180.2,174) -- (180.21,188.86) ;
%Straight Lines [id:da9521798923721101] 
\draw    (189.38,160.9) -- (180.21,188.86) ;
\draw [shift={(190,159)}, rotate = 108.15] [color={rgb, 255:red, 0; green, 0; blue, 0 }  ][line width=0.75]    (10.93,-3.29) .. controls (6.95,-1.4) and (3.31,-0.3) .. (0,0) .. controls (3.31,0.3) and (6.95,1.4) .. (10.93,3.29)   ;
%Straight Lines [id:da9941449479756741] 
\draw    (175,176) -- (180.21,188.86) ;
%Straight Lines [id:da9732004512937782] 
\draw    (171,180) -- (180.21,188.86) ;
%Straight Lines [id:da1362484331489937] 
\draw    (180.21,188.86) -- (189,178) ;
%Straight Lines [id:da46827100625276175] 
\draw  [dash pattern={on 4.5pt off 4.5pt}]  (152,112) -- (180.21,126.11) ;
\draw [shift={(182,127)}, rotate = 206.57] [fill={rgb, 255:red, 0; green, 0; blue, 0 }  ][line width=0.08]  [draw opacity=0] (12,-3) -- (0,0) -- (12,3) -- cycle    ;
%Shape: Circle [id:dp15014681765621263] 
\draw   (347,59.64) .. controls (347,56.68) and (349.4,54.29) .. (352.36,54.29) .. controls (355.32,54.29) and (357.71,56.68) .. (357.71,59.64) .. controls (357.71,62.6) and (355.32,65) .. (352.36,65) .. controls (349.4,65) and (347,62.6) .. (347,59.64) -- cycle ;
%Shape: Cross [id:dp3981761235039729] 
\draw   (355.29,97.49) -- (356.76,98.96) -- (353.1,102.62) -- (356.76,106.28) -- (355.09,107.95) -- (351.43,104.29) -- (347.77,107.95) -- (346.3,106.48) -- (349.96,102.82) -- (346.3,99.16) -- (347.97,97.49) -- (351.63,101.15) -- cycle ;
%Straight Lines [id:da5685230672075661] 
\draw  [dash pattern={on 4.5pt off 4.5pt}]  (252,102) -- (219.85,115.24) ;
\draw [shift={(218,116)}, rotate = 337.62] [fill={rgb, 255:red, 0; green, 0; blue, 0 }  ][line width=0.08]  [draw opacity=0] (12,-3) -- (0,0) -- (12,3) -- cycle    ;
%Straight Lines [id:da21362280084322638] 
\draw    (180.21,57.14) -- (180.23,72) ;
%Straight Lines [id:da006576169757480876] 
\draw    (180.21,57.14) -- (186,71) ;
%Straight Lines [id:da895837153524031] 
\draw    (180.21,57.14) -- (192,68) ;
%Straight Lines [id:da931119100063647] 
\draw    (180.21,57.14) -- (174,71) ;
%Straight Lines [id:da45831390423130736] 
\draw    (180.21,57.14) -- (169,68) ;
%Shape: Rectangle [id:dp34555628192679344] 
\draw  [dash pattern={on 4.5pt off 4.5pt}] (71,45) -- (310,45) -- (310,219) -- (71,219) -- cycle ;
%Straight Lines [id:da03273960282200794] 
\draw  [dash pattern={on 0.75pt off 0.75pt}]  (202.67,139.11) -- (205.33,131.57) .. controls (204.32,129.44) and (204.87,127.87) .. (207,126.86) -- (208.36,123) -- (208.36,123) ;
\draw [shift={(202,141)}, rotate = 289.45] [color={rgb, 255:red, 0; green, 0; blue, 0 }  ][line width=0.75]    (10.93,-3.29) .. controls (6.95,-1.4) and (3.31,-0.3) .. (0,0) .. controls (3.31,0.3) and (6.95,1.4) .. (10.93,3.29)   ;
%Straight Lines [id:da8916699276045337] 
\draw  [dash pattern={on 0.75pt off 0.75pt}]  (200.04,98.71) -- (204.22,105.53) .. controls (206.51,106.08) and (207.38,107.5) .. (206.83,109.79) -- (208.36,112.29) -- (208.36,112.29) ;
\draw [shift={(199,97)}, rotate = 58.53] [color={rgb, 255:red, 0; green, 0; blue, 0 }  ][line width=0.75]    (10.93,-3.29) .. controls (6.95,-1.4) and (3.31,-0.3) .. (0,0) .. controls (3.31,0.3) and (6.95,1.4) .. (10.93,3.29)   ;
%Shape: Cross [id:dp7323622982657709] 
\draw  [fill={rgb, 255:red, 255; green, 255; blue, 255 }  ,fill opacity=1 ] (196.29,127.49) -- (197.76,128.96) -- (194.1,132.62) -- (197.76,136.28) -- (196.09,137.95) -- (192.43,134.29) -- (188.77,137.95) -- (187.3,136.48) -- (190.96,132.82) -- (187.3,129.16) -- (188.97,127.49) -- (192.63,131.15) -- cycle ;
%Shape: Rectangle [id:dp6692991261379269] 
\draw   (371.63,178.58) -- (333.21,179.42) -- (333.04,171.7) -- (371.46,170.87) -- cycle ;
%Straight Lines [id:da65342356625893] 
\draw    (352.33,175.14) -- (352,160.29) ;
%Straight Lines [id:da7340591810548105] 
\draw    (352.33,175.14) -- (346.25,161.41) ;
%Straight Lines [id:da3983536331634785] 
\draw    (352.33,175.14) -- (340.31,164.54) ;
%Straight Lines [id:da36347061540763215] 
\draw    (352.33,175.14) -- (358.24,161.15) ;
%Straight Lines [id:da3803455055033791] 
\draw    (352.33,175.14) -- (363.31,164.04) ;

% Text Node
\draw (80,137) node [anchor=north west][inner sep=0.75pt]   [align=left] {previous\\trajectory};
% Text Node
\draw (326,69) node [anchor=north west][inner sep=0.75pt]   [align=left] {\begin{minipage}[lt]{37.88pt}\setlength\topsep{0pt}
\begin{center}
position
\end{center}

\end{minipage}};
% Text Node
\draw (314,182) node [anchor=north west][inner sep=0.75pt]   [align=left] {\begin{minipage}[lt]{54.31pt}\setlength\topsep{0pt}
\begin{center}
antenna\\with beams
\end{center}

\end{minipage}};
% Text Node
\draw (101,92.4) node [anchor=north west][inner sep=0.75pt]    {$\text{intensity } I_{t-1}$};
% Text Node
\draw (323,114) node [anchor=north west][inner sep=0.75pt]   [align=left] {\begin{minipage}[lt]{41.28pt}\setlength\topsep{0pt}
\begin{center}
previous\\position
\end{center}

\end{minipage}};
% Text Node
\draw (228,84.4) node [anchor=north west][inner sep=0.75pt]    {$\text{intensity } I_{t} ?$};
% Text Node
\draw (194,160.4) node [anchor=north west][inner sep=0.75pt]    {$\text{beam } B_{t-1}$};
% Text Node
\draw (143,198.4) node [anchor=north west][inner sep=0.75pt]    {$\text{antenna } A_{t-1}$};

\end{tikzpicture}
    \caption{\label{fig:beam}The \texttt{BeamManagement6G} environment consists of a set of antennas $A \in$ Antenna, for which at any point in time only one is active. Furthermore, each antenna is equipped with multiple beams, also referred to as \textit{codebook element} $B \in$ Codebook, which are selected automatically. A user moves through the environment, is targeted by one of the antennas, and receives some intensity $I \in$ Intensity. Based on this observation, i.e., the active antenna $A_{t-1}$, beam $B_{t-1}$, and intensity $I_{t-1}$ at the previous time step $t-1$, the task is to select the optimal antenna for the next timestep $t$, i.e., the $A_t$ providing the greatest intensity value $I_t$ to the user. The objective is to maximize the sum of received intensities over the entire trajectory. Note, that the spatial position of the user is unknown, as localization induces unreasonable real-world overhead, and furthermore collides with user privacy concerns.}
\end{figure}

\textbf{\texttt{BeamManagement6G} Environment.} Solutions to `industrial use cases' with \gls{qml} \cite{Dunjko_2018} and \gls{qrl} \cite{meyer2022survey} are currently limited to toy problems, far from generating commercial value.
This limitation is due to constraints of current hardware (i.e., the number and quality of qubits) and the input-output bottleneck \cite{Hoefler2023}.

We propose a novel benchmarking environment that focuses on beam management in wireless communication. Next-generation communication networks feature antennas capable of forming directional beams to serve mobile phones~\cite{Enescu2020}. Beam management is the task of selecting the antenna and beam direction that maximize beam quality (its intensity) at the position of a moving phone. The (discretized) antennas and beam directions are \emph{precoded} into a codebook.

Promising solutions to beam management are based on \gls{rl} \cite{Wang2019, Yammine2023}. Without explicit knowledge of the specific trajectory of the mobile phone, the \gls{rl} agent is trained to select optimal antenna index and codebook element, only given the selections of previous time steps. This describes the \gls{rl} state space as
\begin{equation*}
    \mathcal{S} = \text{Antenna} \cross \text{Codebook} \cross \text{Intensity},
\end{equation*}
where Antenna is the set of antenna indices, and Codebook is the set of codebook elements. As input to the model, both antenna and codebook indices are re-scaled, and the intensity value is in $[0, 1]$ by construction in our model.
Following this paradigm, we developed a fast simulator that allows flexible placement of multiple antennas and to sample random movement trajectories of varying complexity. For simplicity we task the agent to select the base station (i.e. the antenna) but assume the optimal codebook element to be found automatically by the antenna, i.e.,
\begin{equation*}
    \mathcal{A} = \text{Antenna}.
\end{equation*}
Selecting a (close to) optimal beam can be solved via efficient beam search algorithms, cf.\ \cite{Yammine2023}. The reward is the intensity received after selecting the respective antenna and codebook element. A sketch of this environment is given in \cref{fig:beam}, details are deferred to \cref{sec:rlenvironment} and \cref{app:task}.

\begin{figure*}[tb]
    \centering
    \includegraphics{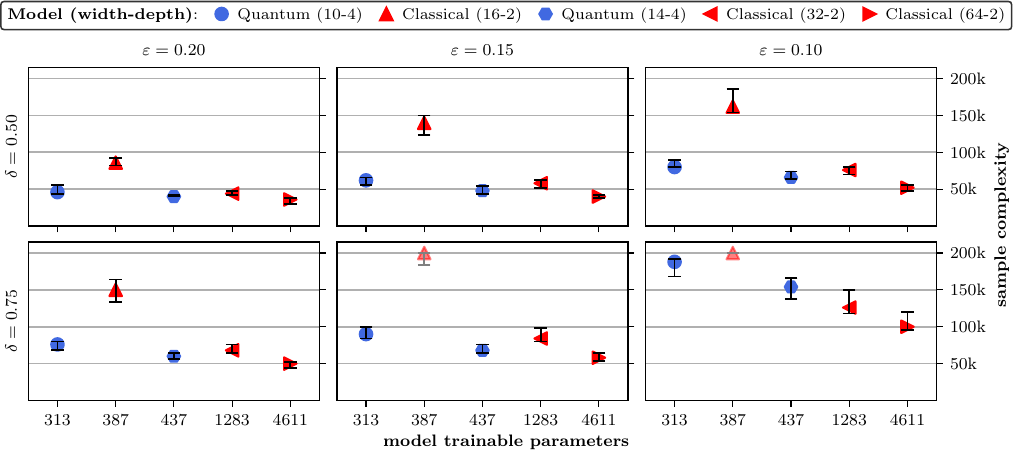}
    \caption{\label{fig:main}Empirical sample complexities $\hat{S}$ of double deep Q-learning for various relative errors $\varepsilon$ and threshold probabilities $\delta$ on the \texttt{BeamManagement6G} environment. The model width denotes the number of neurons in classical hidden layers, and the number of qubits in quantum models, respectively. Moreover, model depth refers to the number of hidden layers in \glspl{dnn}, and number of ansatz repetitions in \glspl{vqc}. The estimate is based on $100$ training runs, with a validation granularity of $2000$ steps. The error bars denote the $5$th and $95$th percentiles, estimated with cluster resampling. The results follow the performance pattern already highlighted in \cref{fig:figone}. Both quantum models are on par with the $3$- to $4$-fold larger medium-sized classical model, with the $14$-qubit model having a slight edge. Configurations where multiple runs do not achieve the targeted relative error rate before the cut-off $200$k steps are shown transparent.}
\end{figure*}

We specifically employed this environment because it has a small state and action space, yet exhibits complex dynamics, enabling meaningful analysis without overwhelming encoding complexity. Moreover, its state space is mostly continuous, reflecting realistic energy variations and beam selections as continuous angles. Finally, there is ongoing debate on suitable methods for beam management, with RL identified as a strong contender~\cite{maggi2024rl,voigt2025deep}, making this environment an excellent choice for benchmarking classical \gls{rl} and \gls{qrl} in a setting inspired by practical 6G challenges. Nonetheless, we highlight that this environment is merely one example to empirically demonstrate our methodology, which can also be applied to other, entirely different tasks.

\section{\label{sec:experiments}Experiments}

We evaluate different (Q)RL algorithms on our novel \texttt{BeamManagement6G} environment using our proposed statistical sample complexity estimator in \cref{subsec:exp_sample_complexity}. \cref{subsec:exp_model_complexity} studies the relationship between sample complexity and (RL) model complexity. In \cref{subsec:exp_test_policies} we compare the performance of the trained policies. 

\subsection{\label{subsec:exp_sample_complexity}Sample Complexity}
We extract the sample complexity of a model via post-processing of the validation logs for $100$ randomly seeded runs. A training epoch collects trajectories from $10$ environments, each with a horizon of $200$ steps. Training is conducted for $100$ epochs in total, i.e., overall $200,000$ agent-environment interactions are conducted in each run. After each epoch, validation is performed on $100$ environment instances, reporting the ratio of received beam intensity vs. the optimal intensity. This validation with relative intensities is done to enhance the stability of the estimate, and allow for evaluating the sample complexity depending on the relative error threshold $\varepsilon$. In contrast, the agent itself has access only to the intensity values, in accordance with the real-world conditions. For practical reasons, we evaluate the sample complexities on only the subset of $\varepsilon$-$\delta$-configurations, which are meaningful for real-world performance.

In \cref{fig:figone}, we plot the empirical sample complexity for three instances of the \gls{ddqn} algorithm on the standard configuration of three antennas of the \texttt{BeamManagement6G} environment. Overall, the large classical model with $4611$ trainable parameters (width $64$, depth $2$) exhibits the lowest, i.e., best, sample complexity. The small quantum model with only $336$ classical and quantum parameters ($14$ qubits, $4$ layers) nearly matches this performance. In contrast to that, the approximately equally-sized small classical model with $387$ parameters (width $16$, depth $2$) clearly requires much more samples for convergence. This hierarchy is preserved across a wide range of threshold probabilities $\delta$ and error thresholds $\varepsilon$. Note, that a sample complexity of $200,000$ indicates, that some runs failed to converge to the desired error threshold.

We extend this analysis by conducting cluster resampling~\cite{Cameron2008}, a variant of bootstrap resampling which captures the correlations in time of the learning process. We employ this method to estimate the $5$th and $95$th percentiles of the estimator $\hat{S}$ for sample complexity. The percentile ranges are indicated by error bars throughout this work, and used to determine outperformance according to \cref{SignificantOutperformance}.

In \cref{fig:main} we summarize our main results, and additionally consider two more models: An even smaller quantum model with $313$ parameters ($10$ qubits, $4$ layers). It exhibits a performance close to, but not quite competitive with the $14$-qubit hybrid model. The restriction to $10$ qubits in most experiments optimally utilizes limited computational resources, see \cref{subapp:ablation}.
Overall, we conclude, that hybrid classical-quantum models significantly outperform similar-sized fully classical approaches w.r.t.\ sample complexity. Furthermore, we include a medium-sized classical model with $1,283$ parameters (width $32$, depth $2$), which performs slightly worse than the large quantum model for most, but not all $\varepsilon$-$\delta$-configurations. Moreover, the large quantum model is not significantly outperformed by the largest \gls{dnn}, demonstrating signs of competitiveness.

A similar analysis of the \gls{ppo} algorithm, where both policy and value function are approximated with \glspl{dnn} and \glspl{vqc}, respectively, supports our findings. Details can be found in \cref{subapp:ppo}. While the results qualitatively match previous observations, typical sample complexities are much higher across all models. However, this is not surprising, as on-policy approaches like \gls{ppo} are known to be less sample-efficient than off-policy routines like \gls{ddqn}. This superiority stems from structures like, e.g., the experience replay buffer, which allows the latter algorithm to re-use previous experience~\cite{sutton2018reinforcement}.

\subsection{\label{subsec:exp_model_complexity}Scaling with Model Complexity}
\begin{figure}[tb]
    \centering
    \includegraphics{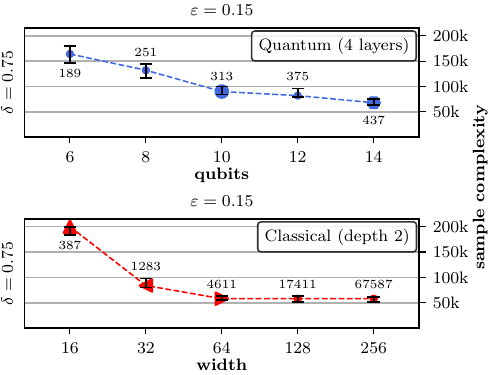}
    \caption{\label{fig:scaling} Correlation of model and sample complexity in the \texttt{BeamManagement6G} environment with the DDQN algorithm. The upper plot depicts quantum models with increasing number of qubits in the \glsentrylong{vqc}, the lower plot shows classical \glspl{dnn} with increasing width of the hidden layers. The number next to the markers denotes the number of trainable parameters in the respective model. All results are averaged over $100$ seeds, and error bars denote the $5$th and $95$th percentiles, estimated with cluster resampling. A more extensive version of this plot can be found in the appendix in \cref{fig:models}.}
\end{figure}
We now explore the relationship between sample efficiency and the complexity of the \gls{rl} and \gls{qrl} model, respectively.
There exist different measures for model complexity both in the classical~\cite{hu2021model} and quantum~\cite{abbas2021power} domain. 
In this work we define model complexity based on the number of trainable parameters. 
The parameter count defines a sequence of models in model space (assuming additional hyperparamter search for given
parameter count). This sequence can then be used to identify potential trends, for example in the behavior of sample complexity as we scale to more powerful quantum models.

In the previous section, we experimentally demonstrated that a hybrid classical-quantum model significantly outperforms purely classical models of similar complexity and competes with much larger ones.
In \cref{fig:scaling}, we analyze this behavior more systematically by plotting sample complexity against both classical and quantum model complexities. Specifically, we vary the width of hidden layers for the classical \glspl{dnn}, and the number of qubits for \glspl{vqc}. A deeper analysis including investigations into different model depth can be found in \cref{subapp:ablation}.

Most crucially, the  sample complexity of the classical models saturates once the hidden layer width reaches $64$. In contrast, increasing the qubit number of the quantum model up to $14$ does not exhibit a similar saturation behavior.

At this point, two questions arise which are addressed in the following: (i) Are models with lower complexity but similar performance generally preferrable, i.e., how should sample and model complexity be balanced?
(ii) Is outperfomance of the quantum model achievable by further increasing qubit numbers? The first question (i) cannot be answered in general as it depends on concrete practical considerations and objectives. If the primary goal is to minimize the number of agent-environment interactions, regardless of training and inference costs, the classical models are superior. However, when memory requirements are considered, the reduced parameter count of the quantum-enhanced model might be advantageous. Assuming scalability of this approach, also runtime improvements are conceivable. The second question (ii) cannot be answered conclusively at the current stage. The saturation behavior in \cref{fig:scaling} suggests that further scaling the quantum model could lead to comparable or superior performance relative to the largest classical model, which, however, can only be substantiated by more experimental evaluations. We stress that a naive extrapolation to higher qubit numbers might be limited in practice by trainability issues~\cite{larocca2024review}. Moreover, it is computationally prohibitive to generate simulation results beyond $14$ qubits due to the substantial overhead required for statistically robust results as summarized in \cref{tab:runtimes} in \cref{subapp:ablation}.

\subsection{\label{subsec:exp_test_policies}Performance of Trained Policies}

\begin{figure}[tb]
    \centering
    \includegraphics{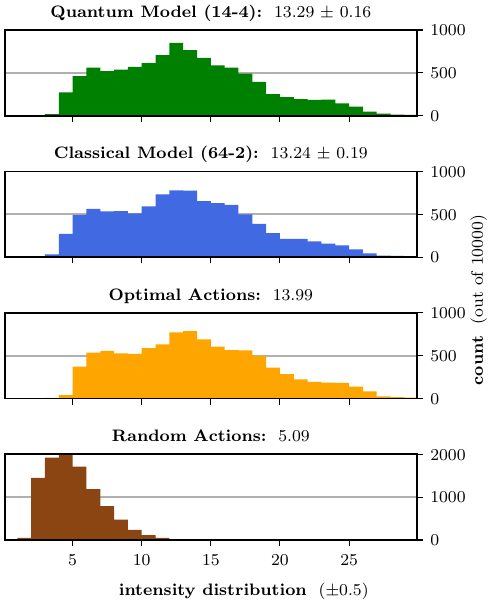}
    \caption{\label{fig:energy} Cumulative intensities over trajectories achieved in the \texttt{BeamManagement6G} environment. For both, the trained hybrid quantum and classical model, we select the $10$ best-performing instances out of the trained models and evaluate each on $1000$ random trajectories. The quantum model (green) and the classical model (blue) closely match the ground-truth performance (yellow), with a large performance gap to random behavior (brown). Next to the subplot identifiers we report the mean value of observed intensities over the $10$ model instances.}
\end{figure}

In the following, we validate that the \gls{rl} models trained in \cref{subsec:exp_sample_complexity} behave in a meaningful way. Therefore, we identified the $10$ best-performing instances with width $64$ and $14$ qubits, respectively. In \cref{fig:energy}, we report the histogram of intensities for the standard environment configuration (see \cref{fig:environment}) received by employing the respective policies on $1000$ random trajectories each.

As a consequence of different trajectories, the maximum achievable intensity varies. Both \gls{rl} strategies produce similar-looking histograms of the received intensities. The reported mean values of the intensity of the quantum model (green) and the classical model (blue) do not indicate significantly different behavior by visual inspection.
Furthermore, we observed comparable results for various different environment configurations in \cref{app:task}.
Therefore, we conclude that the performance of the \gls{qrl} and classical \gls{rl} models is on par for the \texttt{BeamManagement6G} environment.

These closely match the ground-truth intensity distribution (yellow) obtained by brute force.
Moreover, all approaches clearly improve upon a random strategy.
Note that the policies of the trained models correspond to non-trivial behavior patterns. We stress that simply selecting the closest antenna is sub-optimal, as highlighted by the complex spatial intensity patterns shown in \cref{fig:configuration} in the appendix. Moreover, as highlighted in \cref{fig:beam}, the position of antenna and the mobile phone are unknown to the agent and only learned implicitly.

\subsection{\label{subsec:cartpole}Standard CartPole Benchmark}

\begin{figure}[thb!]
    \centering
    \subfigure[\label{subfig:cartpole}Surface plot of sample complexity over various $\epsilon$-$\delta$ configurations. The classical model contains one hidden layer of width $16$, the quantum model operates on $4$ qubits.]{
       \includegraphics{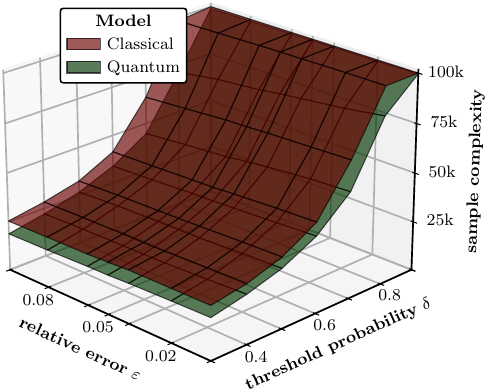}
    }\\
    \subfigure[\label{subfig:cartpole_crosscut}A cross-cut of sample complexity at a relative error of $\varepsilon=0.05$. The error bars denote the 5th and 95th percentiles.]{
       \includegraphics{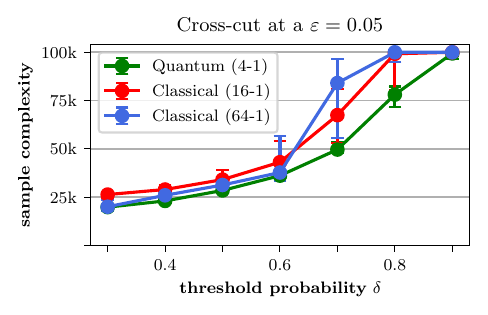}
    }    
    \caption{\label{fig:cartpole}Comparing empirical sample complexities $\hat{S}$ of vanilla policy gradients and a quantum version of the algorithm on the \texttt{CartPole-v1} environment. Both classical and quantum models were optimized over a wide range of hyperparameters and model sizes. The estimates are based on $100$ training runs.}
\end{figure}

To further test our benchmarking scheme, we extend our analysis to the widely used CartPole-v1 environment. This benchmark regularly appears in QRL studies ~\cite{Lockwood_2020,skolik2022quantum}, often accompanied by claims that quantum models outperform classical approaches with fewer number of trainable parameters. However, these statements rely on a small number of training runs and are based on simple visual comparisons. In contrast, here we apply our statistical sample complexity estimator to \texttt{CartPole-v1} with the results shown in \cref{fig:cartpole}.

We employed a vanilla policy gradient algorithm and performed hyperparameter search over a similar range as reported in \cref{subapp:hyperparameter}.  While generally we do not expect to find globally optimal settings~\cite{jordan2020evaluating}, more advanced training methods such as natural gradients~\cite{meyer2023natural} could potentially further improve performance. Nevertheless, the chosen configurations allows us to compare classical and quantum function approximators with the number of trainable parameters ranging from about $30$ to $17,000$. The best quantum setup found, a single-layer model on only four qubits~\cite{meyer2023quantum}, performs competitively to the best classical model across nearly all $\epsilon$-$\delta$ configurations. The cross-cut, \cref{subfig:cartpole_crosscut}, of \cref{subfig:cartpole} at an evaluated threshold of $\varepsilon=0.05$ shows that the performance gap may not always be statistically significant due to overlapping confidence bounds. This again underscores the importance of sufficient number of runs.

We stress that these results should not be interpreted as evidence for quantum advantage. The circuit sizes are very small, allowing straightforward classical simulation, and scaling \texttt{CartPole-v1} in a way that requires larger quantum models is notoriously difficult. These limitations motivate our focus on \texttt{BeamManagement6G}, whose complexity can be flexibly scaled.

\glsresetall

\section{\label{sec:discussion}Discussion of Quantum Advantage}
In this section, we discuss the relation of our benchmarking results of the previous section to potential quantum advantage.
Let us define \textit{heuristic quantum advantage} as the consistent and statistically significant (\cref{EstimatorDefinition}) outperformance (\cref{SignificantOutperformance}) of a quantum algorithm over the best (known) classical algorithm for a given task.
\textit{Do our findings show signatures of heuristic quantum advantage?}
Not quite, because despite extensive hyperparameter optimization, we cannot guarantee that we compared the quantum model to the optimal classical one.

If we set aside this caveat for the moment as double Q-learning is known to be comparatively sample-efficient~\cite{sutton2018reinforcement}, the results are more nuanced. If we steadily increase the number of parameters of the classical model, our numerical results demonstrate that its performance approaches that of the quantum model, but plateaus before significantly outperforming it. Thus, \textit{did we verify quantum advantage under the condition of comparable parameter numbers between models?} Here, we caution again because a necessary condition for any kind of quantum advantage is classical intractability.

Evidence for quantum advantage obtained in small-scale experiments must therefore necessarily be accompanied by arguments for its persistence as problem sizes increase.
For example, from our studies on model complexity we might extrapolate that the performance of the quantum model could be further improved by increasing the number of qubits. In this sense, small-scale experiments may indicate trends that might or might not persist for large problem instances. We therefore encourage more experiments and simulations with the striving to scale, which might offer valuable insights into the behavior of quantum algorithms.
Ultimately, the question of empirical quantum advantage is, at its core, an empirical one and an answer will come from experimental progress of the future.

\section{Conclusion}
In summary, we introduced a robust and statistically sound methodology for benchmarking heuristic 
RL, in particular quantum RL algorithms.
Our procedure relies on an empirical notion of sample complexity captured by a statistical estimator and a rigorous definition of outperformance.
We developed a benchmarking suite inspired by real-world wireless 6G communication tasks, which is flexibly adjustable in difficulty and instance size.

As an application of our methodology, we compared the performance of double deep Q learning, as well as proximal policy optimization, and their quantum counterparts. In an extensive and statistically robust computational analysis, covering many structurally different problem instances and models, we found that the quantum algorithm consistently outperforms the classical version when the number of trainable parameters is similar.

We evaluated the results with respect to potential empirical quantum advantage. We argued that currently no definitive statement can be made but identified trends, that may be corroborated by further experiments on larger scale.

% In the unusual situation where you want a paper to appear in the
% references without citing it in the main text, use \nocite
% \nocite{langley00}

% \clearpage  % This does not count toward the 8 page limit

\ificml
    \ifaccepted
        \section*{Acknowledgements}
        We thank D.~Fehrle for helpful discussions.
        The authors gratefully acknowledge the scientific support and HPC resources provided by the Erlangen National High Performance Computing Center (NHR@FAU) of the Friedrich-Alexander-Universit\"at Erlangen-N\"urnberg (FAU). The hardware is funded by the German Research Foundation (DFG). Composing and producing the results presented in this paper required about $40000$ core hours of compute.\\
        \newline
        \textbf{Funding:} This research was conducted within the Bench-QC project, a lighthouse project of the Munich Quantum Valley initiative, which is supported by the Bavarian state with funds from the Hightech Agenda Bayern Plus.
    \fi
    \section*{Impact Statement}
    This paper aims to advance the field of Quantum Reinforcement Learning by introducing a statistical benchmarking routine for rigorous evaluation. Our work addresses the need for substantiated claims of quantum advantage, promoting a standard of thorough assessment. We anticipate that this contribution will enhance the quality and reliability of future research in Quantum Reinforcement Learning, supporting the development of robust and verifiable advancements in the field.
\else
    \section*{Acknowledgements}
    We thank D.~Fehrle for helpful discussions.
    The authors gratefully acknowledge the scientific support and HPC resources provided by the Erlangen National High Performance Computing Center (NHR@FAU) of the Friedrich-Alexander-Universit\"at Erlangen-N\"urnberg (FAU). The hardware is funded by the German Research Foundation (DFG). Composing and producing the results presented in this paper required about $40000$ core hours of compute.
\fi
    
\section*{Data Availability}
Implementations of environments, algorithms, and evaluation routines described in this paper are available in the repository 
\ificml
    \ifaccepted
        \url{https://github.com/nicomeyer96/qrl-benchmark}.
    \else
        \censor{https: // github.com / nicomeyer96 / benchmark-qrl-6G}.
    \fi
\else
    \url{https://github.com/nicomeyer96/qrl-benchmark}.
\fi
The framework allows for full reproducibility of the experimental results in this paper by executing a single script. Usage instructions and additional details can be found in the \texttt{README} file. Further information and data is available upon reasonable request.

% \ificml
%     Note: To preserve anonymity, the implementation and data is directly attached to this submission.
% \else

% \fi

\section*{Statement of Independent Work}
We acknowledge the existence of a research paper with the same title as ours~\cite{kruse2025benchmarking}. We wish to clarify that both works were conducted independently and without knowledge of each other.

\bibliography{paper}
\bibliographystyle{icml2025}

%%%%%%%%%%%%%%%%%%%%%%%%%%%%%%%%%%%%%%%%%%%%%%%%%%%%%%%%%%%%%%%%%%%%%%%%%%%%%%%
%%%%%%%%%%%%%%%%%%%%%%%%%%%%%%%%%%%%%%%%%%%%%%%%%%%%%%%%%%%%%%%%%%%%%%%%%%%%%%%
% APPENDIX
%%%%%%%%%%%%%%%%%%%%%%%%%%%%%%%%%%%%%%%%%%%%%%%%%%%%%%%%%%%%%%%%%%%%%%%%%%%%%%%
%%%%%%%%%%%%%%%%%%%%%%%%%%%%%%%%%%%%%%%%%%%%%%%%%%%%%%%%%%%%%%%%%%%%%%%%%%%%%%%
\newpage
\appendix

\onecolumn
\section{\label{sec:rlenvironment} 6G Beam forming and RL environment}

This appendix elaborates on the 6G beam management environment used in this work as a benchmark environment for different quantum and classical reinforcement learning algorithms.

Beamforming refers to the technique of transmitting signals in a spatially selective manner despite interference and noise. There are multiple practical applications for radar and sonar, in communication technology, radio astronomy, seismology and many more \cite{Vantrees2002}.
In recent years, the interest in beamforming has further increased due to applications in wireless communication. Beamforming is seen as one of the key technologies to accommodate the growing number of users of high data rate services \cite{Corici2021}.
Next generation communication networks will feature antennas (in the following also referred to as base stations) capable of forming directional beams and spatially targeting user equipment (UE). This process is known as \textit{beam management}.

In this work, we consider one specific optimization task, known as hand-over management, the task of switching base stations to maintain optimal quality of service for mobile UEs.
For given positions of base stations, the transmitted spatial radiation intensity pattern is typically highly complex due to reflection and absorption effects.  
Selecting the optimal base station (the one with the highest beam intensity at the UE's location) would require knowledge about the exact spatial intensity pattern and geolocation of the UE based on initial high-resolution measurements of the intensity field. Machine learning, particularly reinforcement learning \cite{Gershman2010}, offers a promising simplification by implicitly learning the intensity field and UE position. 
Different antenna settings (such as transmission angle, phases and amplitudes of the transmitters,...) are encapsulated in a codebook--a discretized mapping between these intrinsic antenna settings and resulting macroscopic intensity patterns.
In general, the RL agent's task is to select the antenna and codebook element which achieves optimal service.

The practical problem sizes of interest far exceed what can be addressed by currently available quantum computing technology. Thus, we employ a simplified toy model with the underlying physics simulator taken to be as realistic as possible. This was ensured by cross-validating our simplified model with a general physics simulator \cite{Burkhardt2014} for this problem.
Our setting is confined to two spatial dimensions, ignoring reflection, absorption, and beam interference. We assume one beam per antenna but allow a variable number of base stations. The task is to train a reinforcement-learning agent that selects the base station with the greatest radiation intensity at the current location of the agent at each time step. We assume that the optimal beam direction (that is the codebook entry) of the selected base station is automatically determined by the antenna.

\begin{figure}[h]
    \centering
    \includegraphics{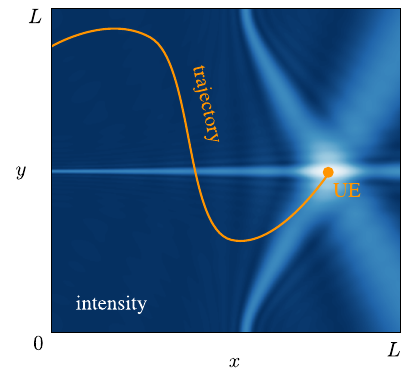}
    \caption{ \label{fig:environment_test} A typical setting in the toy-model reinforcement-learning problem inspired by 6G beam management tasks. The UE moves on a randomly generated trajectory (orange line) in the intensity field of three base stations at positions $(0,L/2), (L/2,0), (L/2,L)$. The main lobe of the antennas is focused on the UE (darker blue corresponds to lower intensity). Clearly visible are also the weaker maxima of the intensity fields stemming from interference effects within the antenna. By traversing the area between the base stations multiple times on different trajectories the reinforcement learning agent learns to select the base station with highest intensity without knowledge of the spatial intensity distribution and the form of its trajectory. For better readability, in this figure the typical inversely squared decrease of the electric field with the distance to the source is removed.}
\end{figure}

The agent only has access to the following information:
\begin{itemize}
    \item the index of the base station $A_{t-1}\in \text{Antenna}$ selected in the previous time step $t-1$, where $\text{Antenna}$ is the set of antenna indices, 
    \item the active codebook element $B_{t-1}\in \text{Codebook}$ of the base station selected in the previous time step, where $\text{Codebook}$ is the set of codebook elements per antenna, and
    \item the received radiation intensity $I_{t-1}\in \text{Intensity}$ from the previously selected base station.
\end{itemize}
Note that the agent neither has access to its current position in the radiation field nor to the specific structure of the beam intensity field. While moving along a randomly generated trajectory, the agent learns the mapping between this available information and the spatial beam intensity pattern and its current location. 
\cref{fig:environment_test} visualizes a typical intensity field in our toy model, showing the trajectory of a UE and the beams from three base stations directed towards the UE. \cref{fig:environment} exemplarily details an environment instance with three antennas and shows the ground-truth for the selection task.
\begin{figure*}[h]  
    \centering
    \subfigure[\label{subfig:render}
    Two instances of a three-antenna environment configuration, with different user trajectories. The dashed gray lines depict the environment boundaries, while the white and black markers indicate the position and orientation of antennas $A_1$, $A_2$, and $A_3$. The trajectories are indicated by the green lines, the green crosses mark the current positions of the user. The task is to select the antenna that leads to optimal service (greatest beam intensity) at the position of the user. The active antenna is marked in white, with the active codebook element indicated by $\expval{\cdot}$. 
    The (symmetric) discretized directions of the main beam are indicated by the white dotted lines. The secondary beams are caused by interference effects, but can also be used to serve users, significantly complicating the task of optimal selection. In each timestep, the agent has to base its decision solely on the previously selected antenna, codebook element, and observed intensity, without access to the spatial position of the user.
    ]{
       \includegraphics[width=0.52\linewidth]{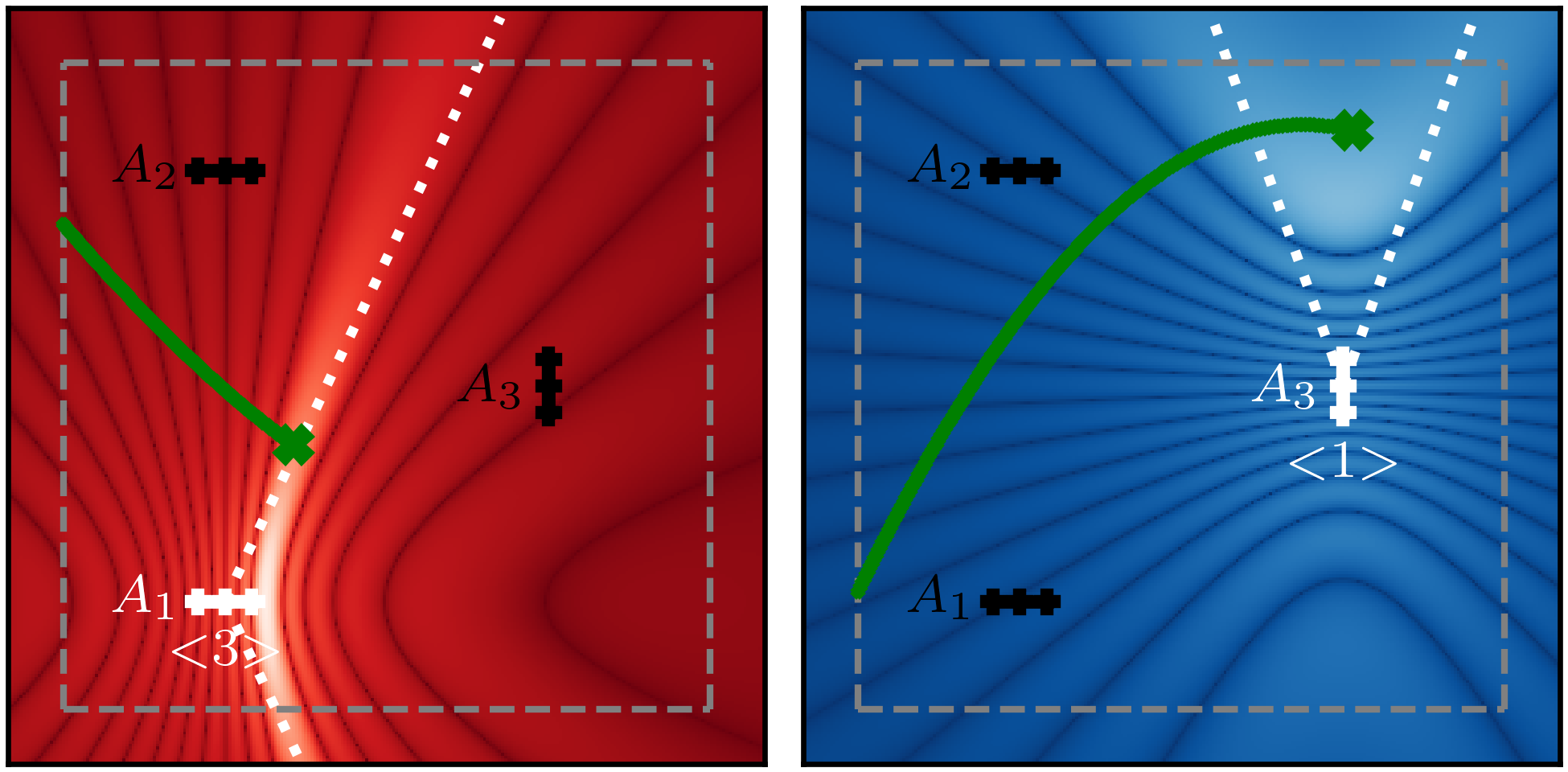}
    }    
    \qquad
    \subfigure[\label{subfig:groundtruth}Ground-truth of the optimal intensity distribution, generated by brute-forcing over all available antenna-codebook configurations. At each point, only the highest received intensity at each point is reported. The colors indicate the optimal antenna, the brightness the intensity magnitude. We emphasize the non-triviality of the ground-truth solution, i.e.\ the optimal antenna selection does not just correspond to selecting the antenna with the smallest spatial distance to the user. Moreover, in real-world scenarios this information would not even be accessible to the \gls{rl} agent, as localization of the user induces unreasonable overhead, and furthermore collides with user privacy concerns.]{
        \centering
       \includegraphics[width=0.41\linewidth]{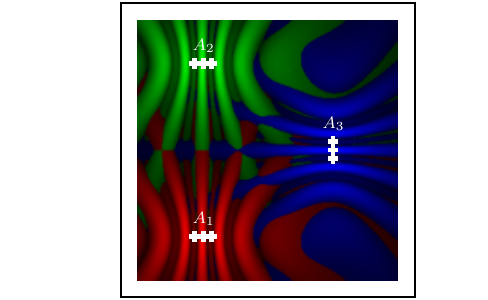}
    }    
    \caption{\label{fig:environment}Two different viewpoints of the \texttt{BeamManagement6G} environment used in this paper. \cref{subfig:render} shows a single active antenna at every step in time, which corresponds to the task the \gls{rl} agent has to solve. \cref{subfig:groundtruth} visualizes the underlying ground-truth solution.}
\end{figure*}

In the following subsections, we first detail our model of the radiation-intensity field produced by the base stations in \cref{sec:Beam_intensity_field}, followed by a discussion of the relation between beam direction, antenna configuration and codebook element in \cref{sec:Antenn_configuration}. We conclude with the description of the procedure for sampling random trajectories in \cref{sec:Trajectory_sampling}.

\subsection{\label{sec:Beam_intensity_field} Beam intensity field}
We now study the antenna model used in this work and visualized in \cref{fig:antenna_model}.
The base station is equipped with a linear array of antenna elements or senders (orange dots in \cref{fig:antenna_model}), with the $j$th sender at position $\mathbf{r}_j$. A sender is a dipole radiation source, which we assume to be pointing in perpendicular direction to the 2d plane we consider. 

\begin{figure}[tb]
    \centering
    \includegraphics{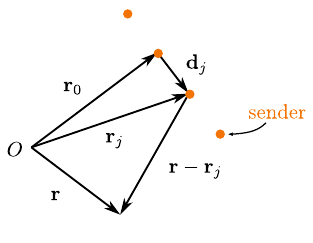}
    \caption{\label{fig:antenna_model} Model of an antenna used in the two-dimensional setting used in this work. The senders of a linear phased array are shown as orange dots in the figure. The electric field emanating from the $j$th point source is modeled in the far-field as a spherical wave. A tunable phase offset  of each sender can be used to direct beams exploiting interference.}
\end{figure}

Further, we assume the distance $|\mathbf{r}-\mathbf{r}_j|$ between observer at position $\mathbf{r}$ and the $j$th sender to be much larger than the distance between the individual senders and introduce the effective location $\mathbf{r}_0$ of the antenna such that $\mathbf{r}_j=\mathbf{r}_0+\mathbf{d}_j$. Below, $\mathbf{r}_0$ will point to the sender in the middle of the array (assuming an odd number of senders). 
Far away from the $j$th sender it's electric field at position $\mathbf{r}$ and time $t$ approximately is a spherical wave \cite{Griffiths2013}
\begin{equation}
\label{electric_field}
    E_j(\mathbf{r},t)=\frac{A}{|\mathbf{r}-\mathbf{r}_j|}\mathrm{sin}(k|\mathbf{r}-\mathbf{r}_j|-\omega t+\phi_j)\,.
\end{equation}
Here, $A$ is a constant, $\omega$ the frequency of the light wave, and $k=\omega/c$ where $c$ is the speed of light is the wave number. We disregard the vector character of the electric field since in the far field of the antenna the electric field component of each sender will approximately point in the same direction. As we will see below, by tuning the phase offset $\phi_j$, directed high-intensity beams can be sent by exploiting interference effects.
Next, we define $\mathbf{R}=\mathbf{r}-\mathbf{r}_0$, and $R=|\mathbf{R}|$. The following approximations are valid in the far field, that is whenever $|\mathbf{d}_j|/R\ll1$ 
\cite{Griffiths2013},
\begin{align}
\label{dipole1}
&|\mathbf{r}-\mathbf{r}_j|=R-\frac{\mathbf{R}\cdot \mathbf{d}_j}{R}+\mathcal{O}(R^{-1})\\
\label{dipole2}
&k[|\mathbf{r}-\mathbf{r}_l|-|\mathbf{r}-\mathbf{r}_j|]=k\frac{\mathbf{R}\cdot (\mathbf{d}_j-\mathbf{d}_l)}{R}+\mathcal{O}(R^{-1})\\
\label{dipole3}
&\frac{1}{|\mathbf{r}-\mathbf{r}_j|}=\frac{1}{R}+\mathcal{O}(R^{-2})\,.
\end{align}
The average field intensity at a point in space is proportional to the square of the sum of all electric field components averaged over time, that is
\begin{equation}
\label{Intensity calculation}
    I(\mathbf{r})\propto\lim _{T\to \infty }\frac{1}{T}\int_0^T\mathrm{d} t\,\Big(\sum_j E_j(t)\Big)^2=\frac{A^2}{2R^2}\Bigg|\sum_j \mathrm{exp}\Big\{i\Big(k\frac{ \mathbf{R}\cdot \mathbf{d}_j}{R}-\phi_j\Big)\Big\}\Bigg|^2\,.
\end{equation}
The final equality follows by inserting \cref{electric_field}, making use of the approximation in \cref{dipole1,dipole2,dipole3}, and a subsequent straightforward calculation.
We model the base stations as a linear phased array of $N_s$ ($N_s$ odd) senders at position $\mathbf{d}_j=j\mathbf{d}$ with $j=-(N_s-1)/2,...,(N_s-1)/2$ and $\phi_j=j \varphi$. Furthermore, $|\mathbf{d}|=\pi/k$ for maximal constructive interference. Substituting these definitions into \cref{Intensity calculation} yields the final result
\begin{equation}
\label{final_intensity}
    I=\frac{B^2}{2 R^2}\frac{\mathrm{sin}^2(N_s \xi/2)}{\mathrm{sin}^2(\xi/2)}
\end{equation}
with the proportionality constant $B$ where
\begin{equation}
    \xi=k\frac{ \mathbf{R}\cdot \mathbf{d}}{R}-\varphi\,.
\end{equation}
Since \cref{final_intensity} diverges for $R\rightarrow 0$, we introduce a cut-off at the value of the intensity at $R=0.001$.
As the number of senders $N_s$ increases, \cref{final_intensity} strongly peaks at $\xi=0$. Measuring the direction of the beam by the angle $\theta$ with respect to $\mathbf{d}$, we find
\begin{equation}
\label{codebook_angles}
    \cos(\theta)=\frac{\varphi}{\pi}\,.
\end{equation}
In our simple model, \cref{codebook_angles} is the mapping between the macroscopic field configuration (the direction $\theta$ of the beam) and the specific physical settings (the phase gradient $\varphi$ of the antenna). The codebook then is a vector $(\theta_1,..., \theta_{N_c})$ of discretized angles evenly distributed over the interval $[0,2\pi)$ with the underlying mapping $\varphi_i=\pi\cos(\theta_i)$ with $i=1,...,N_c$. In our experiments, we set $N_s=17$ and $N_c=9$.

\subsection{\label{sec:Antenn_configuration}Antenna configuration}
To generate an instance of the environment, we randomly sample the positions for a predefined number of antennas.
Inspired by economic considerations in the real world, we avoid that pairs of antennas are located too close to each other. To enforce this, we first sample antenna coordinates $\mathbf{R}_j$ uniform at random within the $2d$ environment $[0,L]\times[0,L]$. By default, we choose $L=6$. We resample positions until the Euclidean distance between any of the antennas is greater than a threshold, i.e.~if $| \mathbf{R}_i - \mathbf{R}_j | > d_{\min}$ for all $i$ and $j$. The random configurations in our work were created with a value of $d_{\min}=1.5$. The orientations of the antennas are individually sampled uniformly at random, and successively normalized.
For a position $\mathbf{r}$ in the plane the ground truth (optimal antenna and codebook entry) is efficiently calculated by maximizing the intensity at $\mathbf{r}$ by iterating over all codebook elements $\varphi_i$ for each antenna. \cref{fig:configuration} illustrates some configurations for different numbers of antennas. The intensity field calculated via \cref{final_intensity} together with the discretized codebook gives rise to the complex patterns shown in the figure that the RL agent is tasked to learn. Note the difference between the intensity field pattern for a particular codebook element as shown e.g. in \cref{fig:environment_test} and the pattern shown in \cref{fig:configuration} where the intensity is maximized over the codebook elements independently at each position.
\begin{figure}[h]
    \centering\includegraphics{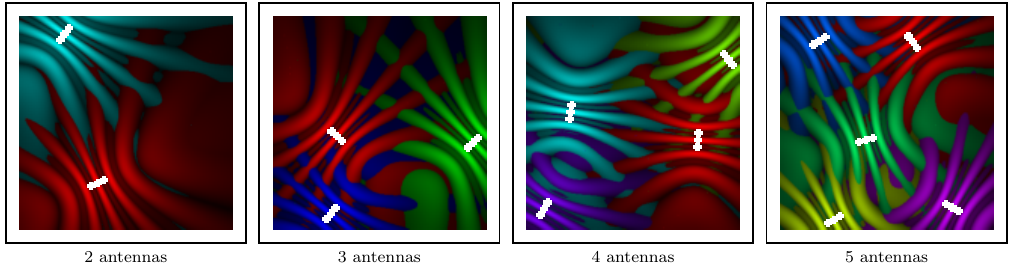}
    \caption{\label{fig:configuration}Different antenna configurations with increasing number of antennas. At each position, the colors indicate the antenna with maximum intensity after maximization over the codebook elements. The brightness of the colors show the intensity where darker corresponds to lower intensity. For better readability, the plots show $R^2I$ where $I$ is the antenna's intensity according to \cref{final_intensity}.}
\end{figure}

\subsection{\label{sec:Trajectory_sampling}Trajectory sampling}
In addition to the antenna configuration, we also sample a random trajectory the agent moves on.
To this end, we uniformly sample $n\geq2$ points $(x_i, y_i)$ for $i=1,...,n$ within the 2d grid $[0,L]\times[0,L]$. At the boundary we choose $x_1=0$ and $x_n=L$. The points are subsequently interpolated by a cubic spline function $\mathbf{s}(t)=(y(t), x(t))$ with $t\in[0,1]$ and $\mathbf{s}(0)=(0,y_1)$ and $\mathbf{s}(1)=(L,y_n)$.
We perform rejection sampling until $0<y(t)<L$ and $0<x(t)<L$ for all $t$. The number of support points $n$ allows to control the complexity of the trajectory. \cref{fig:trajectory} exemplarily shows trajectories with an increasing number of support points.
\begin{figure}[h]
    \centering\includegraphics{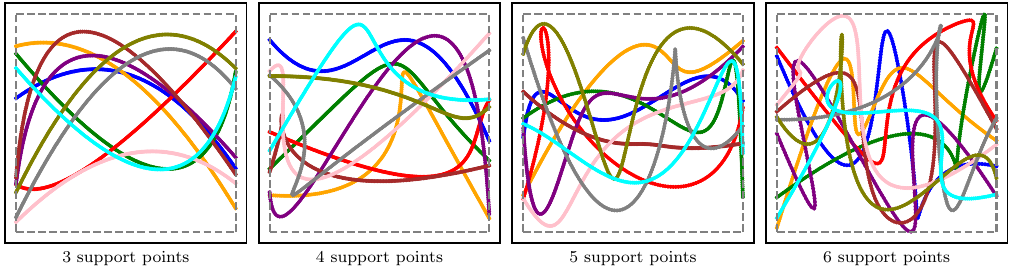}
    \caption{ \label{fig:trajectory} Different trajectory degrees. We sample $n$ points in the two-dimensional grid and interpolate them by a cubic spline function. The complexity of the trajectories increases with the number of support points $n$. The figure shows sample trajectories for $n=3,...,6$.}
\end{figure}
Each episode starts at $t=0$ and ends at $t=1$. The parametrization of the trajectory $\mathbf{s}(t)$ by $t$ results in a $t$-dependent velocity. As a simplification, we fix the velocity of the agent to $v_0$ which in general is trajectory dependent due to the different arc lengths of the trajectories. To achieve a constant absolute value $v_0$ of the velocity, we re-parametrize $\mathbf{s}(t(\tau))$ and differentiate with respect to $\tau$
\begin{equation}
    \frac{\mathrm{d}\mathbf{s}}{\mathrm{d} \tau}=\frac{\partial \mathbf{s}(t)}{\partial t}\frac{\mathrm{d} t}{\mathrm{d} \tau}\,.
\end{equation}
Taking the absolute on both sides of the equation, fixing $|\mathrm{d}\mathbf{s}/\mathrm{d}\tau|=v_0$ and  abbreviating $v(t)=|\partial \mathbf{s}(t) /\partial t|$,  we arrive at the differential equation
\begin{equation}
    \frac{\mathrm{d}\tau}{\mathrm{d}t}=\frac{v(t)}{v_0}\,.
\end{equation}
Integration with the initial conditions $t(\tau=0)=0$ and $t(\tau=1)=1$ then yields
\begin{equation}
    \tau(t)=\frac{1}{v_0}\int_0^t \mathrm{d}t^\prime v(t^\prime)
\end{equation}
with $v_0=\int_0^1 \mathrm{d}t^\prime v(t^\prime)$.
We use this functional dependence between $t$ and $\tau$ to re-parametrize the spline function of the trajectory.

\section{\label{sec:estimator} Sample complexity estimator}
This appendix provides more details on the statistical estimator for evaluation of sample complexity.
Sample complexity, as used in this work, follows the intuitive notion: Sample complexity is the number of interactions with the environment the reinforcement learning algorithm needs to achieve a certain performance. More formally, following Ref.~\cite{Kakade2003, Kearns1999}, given a threshold $V^*$, the sample complexity is the number of interactions with the environment until a measure of the algorithm's performance such as the expected return is larger than the threshold with high probability. This definition is meaningful for \gls{rl} algorithms with guaranteed monotonic convergence properties to the optimal policy \cite{Kakade2003} such as policy iteration in tabular settings \cite{sutton2018reinforcement}.

However, for algorithms with no performance guarantees, the situation is more involved as showcased by \cref{fig:sampling_complexity_ilustration}.
\begin{figure}[ht]
    \centering\includegraphics{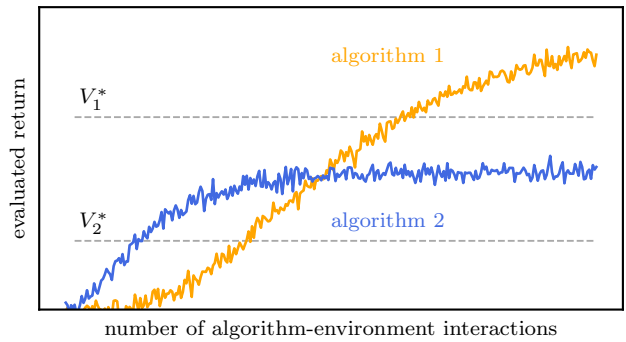}
    \caption{ \label{fig:sampling_complexity_ilustration}The figure shows the learning curves for algorithm 1 and algorithm 2. While algorithm 2 exhibits lower sample complexity with respect to threshold $V_2^*$ than algorithm 1, the converse is true for threshold $V_1^*$, which algorithm 2 even never seems to reach.
    This example shows the following: If convergence to optimality cannot be proved for the algorithm, the definition of sample complexity is only meaningful when defined with respect to a threshold.}
\end{figure}
The figure exemplarily shows the learning curves of two algorithms. Since we consider here for example neural-network based algorithms with no convergence guarantees to the optimal value, which might even be unknown, the question of which of these algorithms has lower sample complexity can only be answered with respect to a predefined threshold. Indeed, in \cref{fig:sampling_complexity_ilustration} algorithm 2 has lower sample complexity with respect to $V^*_2$ than algorithm 1 because the learning curve shown first crosses the threshold $V_2^*$. Conversely, algorithm 1 has lower sample complexity with respect to $V^*_1$ which algorithm 2 even seems to never reach.
In this case, we assign infinite sample complexity to algorithm 2.

So far the discussion has been for two specific training runs. 
In the following, we capture the notion of sample complexity more rigorously by viewing
the training curve as the realization of a stochastic process. The stochasticity stems from randomness in the environment but also from the algorithm itself (random initialization of neural network weights, sampling from action distribution, etc.). Statements about sampling complexity under randomness thus require the definition of a statistical estimator introduced in the following:

Consider a random process $\{V_t, t=1,...,T\}$. For given $\delta \in (0,1]$ we define a criterion to decide if $V_t$ is below or above a threshold $V^*$ via the probability $P_t=P(V_t\geq V^*)$ and define the sampling complexity as
\begin{equation}
    S=\sum_{t=1}^T \mathbb{I}\left[P_t<\delta  \right]\,.
\end{equation}
Here, $\mathbb{I}[\cdot]$ is the indicator function which is one if its argument is true and zero otherwise. The number of time steps $T$ is sent to infinity but in practice taken to be large enough to capture possible instabilities of the algorithm.
Obviously, we do not have access to $P_t$, which we therefore estimate from independent training runs. This leads to the following definition of empirical sample complexity:

\begin{definition}[Estimator empirical sample complexity -- general case]
Given a collection of $N$ random processes $\{V_t^{(i)}, t=1,...,T\}$ with $i=1,...,N$ and $V_t^{(i)}$ i.i.d. for given $t$, a threshold value  $V^*\in \mathbb{R}$ and a threshold probability $\delta \in (0,1]$, we call 
\begin{equation}
\label{Estimator}
    \hat{S}=\sum_{t=1}^T \mathbb{I}\big[\hat{P}_t<\delta \big]
\end{equation}
where
\begin{equation}
\label{empirical probability general}
    \hat{P}_t=\frac{1}{N}\sum_{i=1}^N\mathbb{I}\big[V_t^{(i)}\geq V^*\big]\,,
\end{equation} 
the \textit{empirical sample complexity}.
\end{definition}

Note that $\mathbb{E}\hat{P}_t=P_t$, consequently $\hat{P}_t$ is unbiased. The estimator $\hat{S}$ can be simplified when $V_t^{(i)}$ is bounded, e.g.~$V_t^{(i)}\in[0, V_\mathrm{max}]$. In \cref{sec:sample_complexity} we redefine $V_t^{(i)} \rightarrow V_t^{(i)}/V_\mathrm{max}$ and $V^*\rightarrow V^*/V_\mathrm{max}:=1-\varepsilon$ with $\varepsilon\in[0,1]$. In this case \cref{empirical probability general} becomes
\begin{equation}
    \hat{P}_t=\frac{1}{N}\sum_{i=1}^N\mathbb{I}\big[V_t^{(i)}\geq 1-\varepsilon\big]\,.
\end{equation} 

In the following, we investigate the properties of $\hat{S}$, in particular consistency and bias.
\begin{theorem}[Consistency]
$\hat{S}$ is consistent, that is for all $\eta>0$ we find that $\lim _{N\to \infty }\ P\big (|\hat{S}-S |>\eta \big )=0$    
\end{theorem}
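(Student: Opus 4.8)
The plan is to exploit that, for each fixed $t$, the quantity $\hat{P}_t$ is a sample mean of $N$ i.i.d.\ Bernoulli variables $\mathbb{I}[V_t^{(i)}\geq V^*]$ with common mean $P_t$, so that $\hat{P}_t\to P_t$ in probability by the weak law of large numbers, and to transfer this convergence through the discontinuous map $x\mapsto\mathbb{I}[x<\delta]$. Since $T$ is finite, both $\hat{S}$ and $S$ are sums of finitely many indicators, and it suffices to control each summand separately. The decisive structural fact I would invoke first is that $\hat{S}$ and $S$ take values in $\{0,1,\dots,T\}$, so for any $\eta>0$ the event $\{|\hat{S}-S|>\eta\}$ is contained in $\{|\hat{S}-S|\geq 1\}$.

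Next I would bound the discrepancy termwise. By the triangle inequality,
\begin{equation}
  |\hat{S}-S|\;\leq\;\sum_{t=1}^T\big|\mathbb{I}[\hat{P}_t<\delta]-\mathbb{I}[P_t<\delta]\big|,
\end{equation}
and since each summand lies in $\{0,1\}$, the event $\{|\hat{S}-S|\geq 1\}$ is contained in the union over $t$ of the mismatch events $\{\mathbb{I}[\hat{P}_t<\delta]\neq\mathbb{I}[P_t<\delta]\}$. A union bound then yields
\begin{equation}
  P\big(|\hat{S}-S|>\eta\big)\;\leq\;\sum_{t=1}^T P\big(\mathbb{I}[\hat{P}_t<\delta]\neq\mathbb{I}[P_t<\delta]\big),
\end{equation}
so it remains to show that each mismatch probability vanishes as $N\to\infty$.

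To control a single mismatch probability I would split on the sign of $P_t-\delta$ and apply a concentration inequality (Hoeffding's, which is available because the summands $\mathbb{I}[V_t^{(i)}\geq V^*]$ are bounded in $\{0,1\}$). If $P_t<\delta$, a mismatch forces $\hat{P}_t\geq\delta$, i.e.\ $\hat{P}_t-P_t\geq\delta-P_t>0$, an event of probability at most $\exp(-2N(\delta-P_t)^2)$; symmetrically, if $P_t>\delta$ a mismatch forces $\hat{P}_t<\delta$, with probability at most $\exp(-2N(P_t-\delta)^2)$. In both cases the bound tends to $0$, and summing the finitely many $t$ gives $P(|\hat{S}-S|>\eta)\to 0$.

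The step I expect to be the main obstacle is the boundary case $P_t=\delta$. There the gap $|P_t-\delta|$ closes, the Hoeffding exponent degenerates, and $P(\hat{P}_t<\delta)$ no longer vanishes — by the central limit theorem it stabilizes near $1/2$ — so consistency genuinely fails at such a degenerate threshold. I would therefore isolate the hypothesis $P_t\neq\delta$ for every $t\in\{1,\dots,T\}$ as a mild, generically satisfied assumption under which the argument closes, emphasizing that $\delta$ is a free tuning parameter and can always be chosen to avoid the finitely many values $\{P_t\}_{t=1}^T$.
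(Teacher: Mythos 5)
Your proof is correct, and it shares the paper's basic architecture --- both arguments reduce $|\hat{S}-S|$ via the triangle inequality to controlling, for each $t$ separately, the mismatch between $\mathbb{I}[\hat{P}_t<\delta]$ and $\mathbb{I}[P_t<\delta]$ --- but the tools differ. The paper passes through Markov's inequality, $P(|\hat{S}-S|>\eta)\leq \frac{1}{\eta}\,\mathbb{E}|\hat{S}-S|$, rewrites $\mathbb{E}\,|\mathbb{I}[\hat{P}_t<\delta]-\mathbb{I}[P_t<\delta]|$ as $|P(\hat{P}_t<\delta)-\mathbb{I}[P_t<\delta]|$, and then invokes only the qualitative consistency of $\hat{P}_t$ (weak law of large numbers) through a case analysis on the sign of $\Delta=\delta-P_t$. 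You instead exploit that $\hat{S}-S$ is integer valued, so $\{|\hat{S}-S|>\eta\}\subseteq\{|\hat{S}-S|\geq 1\}$ sits inside the union of the per-$t$ mismatch events, and you finish with a union bound and Hoeffding's inequality. This buys an explicit non-asymptotic rate, $P(|\hat{S}-S|>\eta)\leq\sum_{t=1}^{T}\exp\bigl(-2N(\delta-P_t)^2\bigr)$, rather than a bare limit statement, at no extra cost since the Bernoulli structure is available in both arguments.

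The more substantive difference is the boundary case $P_t=\delta$, which you correctly identify as a genuine failure point rather than a technicality, and which the paper's proof glosses over. In the paper's case $\Delta\leq 0$, the correction term is $P(\hat{P}_t-P_t<-|\Delta|)$, and the consistency of $\hat{P}_t$ that the paper cites only makes this vanish when $\Delta<0$ strictly; at $\Delta=0$ the term is $P(\hat{P}_t<P_t)$, which tends to $1/2$ (for $0<P_t<1$) by the central limit theorem --- exactly the smoothed step of width $\sim 1/\sqrt{N}$ exhibited by the paper's own bias theorem, since $\Phi(0)=1/2$. Consequently, if some $P_t$ equals $\delta$ exactly, the mismatch probability at that $t$ does not vanish and the theorem as stated fails; your explicit hypothesis that $P_t\neq\delta$ for every $t$ (generic, since $\delta$ is a free tuning parameter) is what actually makes the statement true. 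On this point your write-up is the more rigorous of the two.
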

\begin{proof}
    \begin{align}
    \label{Consistency1}
    P\big(|\hat{S}-S|>\eta\big)&\leq \frac{1}{\eta}\mathbb{E}|\hat{S}-S|\\
    &\leq  \frac{1}{\eta} \sum_{t=1}^T \mathbb{E}\,\big| \mathbb{I}[\hat{P}_t<\delta]-\mathbb{I}[P_t<\delta]\big|\\
    &= \frac{1}{\eta} \sum_{t=1}^T \big| P(\hat{P}_t<\delta)-\mathbb{I}[P_t<\delta]\big|\,.
    \end{align}
    We first used Markov's inequality, followed by the triangle inequality. In the final step we made use of the fact that the indicator function is either zero or one and that $\mathbb{E}\,\mathbb{I}[\cdot]=P(\cdot)$. 
    With the definition $\Delta=\delta-P_t$, we find 
    \begin{equation}
      P\big (\hat{P}_t<\delta)=P\big (\hat{P}_t-P_t<\Delta)=\mathbb{I}[P_t<\delta]+
    \begin{cases} 
        P\big (\hat{P}_t-P_t <-|\Delta|) & \text{if } \Delta \leq 0 \\
        -P(\hat{P}_t-P_t\geq\Delta) & \text{if } \Delta > 0 
    \end{cases}\\
    \end{equation}
    and thus
    \begin{equation}
    \label{consistency2}
    \lim _{N\to \infty }\ P\big (\hat{P}_t<\delta)=\mathbb{I}[P_t<\delta]\,,
    \end{equation}
    recalling that $\hat{P}_t$ is a consistent estimator for $P_t$, that is $\lim _{N\to \infty }\ P\big (|\hat{P}_t-P_t|> \Delta)=0$ for $\Delta >0$. The claim then follows when taking the limit of \cref{Consistency1} using \cref{consistency2}.
\end{proof}

Let us now consider the large $N$ limit and possible bias of the estimator. To this end, we calculate the probability of $\mathbb{I}[\hat{P}_t<\delta]=1$ even though $P_t>\delta$. 
\begin{theorem}[Bias]
For large $N$ we have $P(\mathbb{I}[\hat{P}_t<\delta]=1 )\rightarrow  \Phi(\Delta \sqrt{N}/\sigma)$ with $\Delta=\delta-P_t$ and $\sigma=\sqrt{P_t(1-P_t)}$ where $\Phi$ is the error function.
\end{theorem}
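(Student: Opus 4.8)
The plan is to exploit the Bernoulli structure of $\hat{P}_t$ that is already built into \cref{empirical probability general}. For fixed $t$ the summands $X_i := \mathbb{I}[V_t^{(i)}\geq V^*]$ are i.i.d.\ Bernoulli random variables with $\mathbb{E}X_i = P(V_t\geq V^*) = P_t$ and variance $\sigma^2 = P_t(1-P_t)$, so that $N\hat{P}_t = \sum_{i=1}^N X_i$ follows a $\mathrm{Binomial}(N,P_t)$ law. Thus $\hat{P}_t$ is exactly a standardizable sample mean, and the claim is the de Moivre--Laplace normal approximation of a binomial tail.

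First I would rewrite the target event in standardized form. Since $\mathbb{I}[\hat{P}_t<\delta]=1$ precisely when $\hat{P}_t<\delta$, setting $Z_N := \sqrt{N}(\hat{P}_t-P_t)/\sigma$ and $\Delta = \delta - P_t$ gives
\begin{equation}
P\big(\mathbb{I}[\hat{P}_t<\delta]=1\big) = P(\hat{P}_t<\delta) = P\!\left(Z_N < \frac{\sqrt{N}\,\Delta}{\sigma}\right),
\end{equation}
so it remains to show that this probability is well approximated by $\Phi(\sqrt{N}\Delta/\sigma)$.

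The crux — and the main obstacle — is that the upper limit $x_N := \sqrt{N}\Delta/\sigma$ \emph{moves} with $N$, so the plain central limit theorem (which only yields $P(Z_N\le x)\to\Phi(x)$ at each \emph{fixed} $x$) does not apply directly. To control the evaluation at a moving point I would invoke the Berry--Esseen theorem: since the third absolute central moment $\rho = \mathbb{E}|X_i-P_t|^3$ of a Bernoulli variable is finite (indeed $\le 1$), there is an absolute constant $C$ with
\begin{equation}
\sup_{x\in\mathbb{R}}\big|P(Z_N\le x)-\Phi(x)\big| \le \frac{C\,\rho}{\sigma^3\sqrt{N}} \xrightarrow{N\to\infty} 0.
\end{equation}
Because this bound is \emph{uniform} in $x$, evaluating it at $x=x_N$ yields $|P(Z_N<x_N)-\Phi(x_N)|\to 0$, which is exactly the asserted approximation $P(\mathbb{I}[\hat P_t<\delta]=1)\to\Phi(\sqrt N\Delta/\sigma)$, the arrow understood as the vanishing of the difference between the two $N$-dependent quantities. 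A minor technicality is the strict-versus-nonstrict inequality in $P(Z_N<x_N)$ against $P(Z_N\le x_N)$; the discrepancy is the atom $P(Z_N=x_N)$, which by the local limit theorem has mass $O(1/\sqrt{N})$ and is therefore absorbed into the same error term.

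Finally I would record the interpretation relevant to bias, which is the purpose of the theorem: when $P_t>\delta$ (so $\Delta<0$) the estimator erroneously reports $\hat{P}_t<\delta$ with probability $\Phi(\sqrt{N}\Delta/\sigma)\to 0$, and symmetrically the false-negative probability tends to zero when $P_t<\delta$. Hence the per-step misclassification rate decays and, summed over the $T$ steps, shows that $\hat{S}$ is asymptotically unbiased. The only genuine work is the Berry--Esseen step; the reduction to it and the bookkeeping of the atom are routine.
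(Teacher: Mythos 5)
Your proposal is correct and takes essentially the same route as the paper: both recognize $\hat{P}_t$ as the mean of i.i.d.\ Bernoulli$(P_t)$ indicators and reduce the claim to the normal approximation $P(\hat{P}_t<\delta)\approx\Phi(\Delta\sqrt{N}/\sigma)$. The paper's proof is a one-line invocation of the central limit theorem; your Berry--Esseen step (and the $O(1/\sqrt{N})$ treatment of the atom at the boundary) is a rigorization of that same argument, correctly patching the moving-evaluation-point subtlety that the paper's informal CLT appeal glosses over, rather than a genuinely different approach.
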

\begin{proof}
We obtain an expression for $P(\mathbb{I}[\hat{P}_t<\delta]=1 )$ by the central-limit theorem. We find
\begin{equation}
\label{central_limit_convergence}
  P(\mathbb{I}[\hat{P}_t<\delta]=1 )=P( \hat{P}_t<\delta)\rightarrow  \Phi(\Delta \sqrt{N}/\sigma)\quad (\text{$N$ large})
\end{equation}
where  $\sigma^2=\mathrm{Var}\big(\mathbb{I}\big[V_t^{(i)}\geq V^*\big]\big) = P_t(1-P_t)$ and  the error function $\Phi(z)=\frac{1}{\sqrt{2\pi}}\int_{-\infty}^z\!\mathrm{d}z\,\mathrm{e}^{-\frac{1}{2}z^2}$.
\end{proof}
 The estimator therefore is slightly biased around $P_t\approx \delta$ where \cref{central_limit_convergence} deviates from the step function $\mathbb{I}(P_t<\delta)$ over a width of $\sqrt{\delta(1-\delta)}/\sqrt{N}$. \cref{fig:central_limit} shows the numerical evaluation for $N=100$ samples and $V^{(i)}\sim \text{Uniform}(0,1)$. The smoothed step function is clearly visible for different values of $\delta$. For the uniformly distributed random variables considered here, the central limit theorem closely approximates the simulated curves for $N=100$ which corresponds to the chosen number of training runs throughout this work.

 \begin{figure}[ht]
    \centering\includegraphics{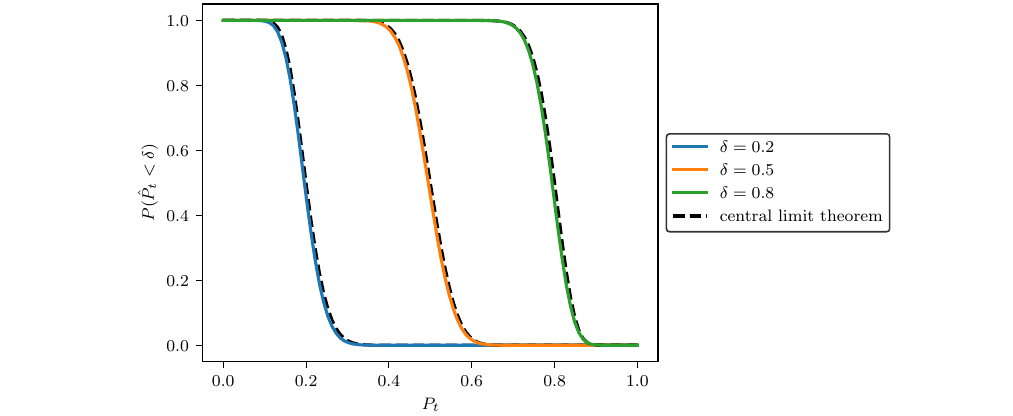}\caption{\label{fig:central_limit} The figure shows numerical simulations for $V^{(i)}\sim \text{Uniform}(0,1)$. The $y$ axis shows the probability $P(\mathbb{I}[\hat{P}_t<\delta]=1 )=P( \hat{P}_t<\delta)$, on the $x$-axis we scan $P_t=1-V^*$. The figure shows smoothed step functions around $P_t=\delta$. For the exemplary distribution $V^{(i)}\sim \text{Uniform}(0,1)$, the central limit theorem (dashed lines) approximates the functions well. }
\end{figure}

\section{\label{app:algorithm}Experimental Details}

This appendix provides details on the experimental setup that was used for producing the results in this paper. Moreover, we will highlight the implemented tools and configuration options. All implementations and raw results are accessible as described in the data availability statement.

We implemented the pipeline in \texttt{python} to be compatible with most of the ongoing research efforts in the \gls{ml} and \gls{qc} community. The classical routines are mainly based on the \texttt{PyTorch} library~\cite{paszke2019pytorch}. For hyperparameter optimization of the quantum models we made use of the \texttt{qiskit-torch-module}~\cite{meyer2024qiskit}, a library for fast simulation of quantum neural networks on multi-core systems. These initial experiments (results see \cref{subapp:hyperparameter}) were conducted on a system with a AMD Ryzen 9 5900X 12-Core CPU. As the successive experiments exceeded the capacity of a single machine, these were executed on the \texttt{woody} cluster of the Erlangen National Performance Computing Center (NHR@FAU),
% \ificml
%     \censor{\texttt{woody}}
% \else
%     \texttt{woody}
% \fi
% cluster of the 
% \ificml
%     \censor{Erlangen National Performance Computing Center (NHR@FAU),}
% \else
%     Erlangen National Performance Computing Center (NHR@FAU),
% \fi
consisting of $112$ nodes of Intel Xeon E3-1240 v6 4-Core CPUs. As these are addressable in a single-core granularity, the \texttt{PennyLane} library~\cite{Bergholm_2022} was found to be more efficient for simulating the quantum circuits.

The pipeline itself consists of three main components: 
\begin{itemize}
\item[(1)] A realization of the \texttt{BeamManagement6G} environment described in \cref{sec:rlenvironment}, following the OpenAI \texttt{gymnasium} API~\cite{brockman2016openai}: The three-dimensional observation consists of base station index, received intensity, and codebook element of the previous time step. The first and last elements are normalized component-wise to the range $\left[ 0, 1 \right]$. The intensity is inherently restrained to $\left[ 0, 1 \right]$. The implementation allows to stack multiple past observations to provide more information to the agent, but this was not found to improve the overall performance for the setups considered in this paper. During training, the absolute received intensity value is returned as reward value. For testing purposes, it is also possible to report the fraction of the maximum achievable intensity. However, this internally computes the ground-truth solution, and therefore this information should not be used during training phase.

The agent's action is to select one of the available base stations by its index. Once selected, the environment internally performs a sweep over the available codebook elements and selects that corresponding to the greatest intensity at the position of the UE. The internal spatial position of the environment is updated according to a trajectory object following \cref{sec:Trajectory_sampling}. For the experiments in this paper, we selected a constant trajectory length of $200$ steps. If not mentioned otherwise, the experiments were conducted with the three-antenna setup displayed in \cref{fig:environment}. Our framework is implemented in a modular fashion, which allows to easily replace this environment with others available through the \texttt{gymnasium} library.

\item[(2)] A \gls{rl} algorithm that trains on the environment for a user-defined maximum number of steps. For that purpose, we realized two different algorithms: Double Deep Q-Networks (DDQN)~\cite{van2016deep}, an off-policy value-function based routine; Proximal Policy Optimization (PPO)~\cite{schulman2017proximal}, an on-policy policy-function based routine. Most experiments in this paper refer to the DDQN algorithm, as off-policy algorithms are usually more sample-efficient than on-policy approaches~\cite{sutton2018reinforcement}. However, to provide an additional example of using our sample complexity estimator, we added some experimental results for PPO in \cref{subapp:ppo}. Both algorithms are integrated from the \texttt{Tianshou} library~\cite{weng2022tianshou}, which makes it straightforward to extend our framework with additional \gls{rl} algorithms. Our framework implements classical neural networks and hybrid quantum-classical networks as function approximators. Details on configuration possibilities and model sizes are provided and analyzed in \cref{app:model}.

During the training procedure, validation is performed in regular intervals. The training and validation trajectories and all associated rewards are logged for successive computation of the sampling complexity. Additionally, intermediate and final parameters of the model are stored, which allows for retrospective fine-tuning and testing.

\item[(3)] A tool for evaluating the sample complexity, based on the logged trajectories from the training routine. Additionally, we implement the estimation of percentiles via cluster re-sampling~\cite{Cameron2008}. This is realized as a post-processing step and requires only few computational resources once the training data is available. 
\end{itemize}

A pseudocode overview of the end-to-end pipeline for estimating the sample complexity of DDQN can be found in \cref{alg:dqn}. A modified version for the PPO algorithm is provided in \cref{alg:ppo}.

\begin{algorithm}[p]
    \caption{Estimating the Sample Complexity of Double Deep Q-Networks (DDQN)}
    \label{alg:dqn}
    \begin{algorithmic}
        \STATE {\bfseries Input:} environment $\mathcal{E}$ (with a horizon of $200$ for \texttt{BeamManagement6G}), state-action value function approximator $Q$,
        \STATE {\bfseries Input:} training runs $N_{\text{seed}}$ (defaults to $100$), epochs to train for $N_{\text{epoch}}$ (defaults to $100$),
        \STATE {\bfseries Input:} number of training environments $N_{\text{env}}$ (defaults to $10$), validation environments $N_{\text{val}}$ (defaults to $100$)
        \STATE {\bfseries Output:} sample complexity of DDQN on environment $\mathcal{E}$ for threshold probabilities $\delta_0, \dots$ and error thresholds $\varepsilon_0, \dots$
        \STATE
        \STATE Set up empty reward\_buffer of shape $N_{\text{seed}} \times N_{\text{epoch}}$
        \FOR{$N_{\text{seed}}$ different (random) initial parametrizations of $Q$}
            \STATE Initialize a standard DDQN algorithm~\cite{van2016deep,weng2022tianshou} with $Q$ and hyperparameters
            \STATE $-~\varepsilon_{\text{greedy}}$: epsilon-greedy action selection
            \STATE $-~\alpha_{C},\alpha_{Q}$: classical and (optional) quantum learning rate
            \STATE $-~\gamma$: reward discount factor
            \STATE $-~N_{\text{sync}}$: target network synchronization rate
            \STATE $-~N_{\text{buffer}}$: experience replay buffer size
            \STATE $-~N_{\text{batch}}$: mini-batch size for update
            \FOR{$N_{\text{epoch}}$ training epochs}
                \STATE Use the DDQN algorithm on $N_{\text{env}}$ parallel training environments
                \STATE Employ current policy on $N_{\text{val}}$ parallel validation environments
                \STATE Store the averaged validation result to the reward\_buffer
            \ENDFOR
        \ENDFOR
        \FOR {threshold probabilities $\delta_0, \delta_1, \dots$}
            \FOR {error thresholds $\varepsilon_0, \varepsilon_1, \dots$}
                \STATE Determine the sample complexity and respective percentiles from the reward\_buffer as described in \cref{sec:estimator}
            \ENDFOR
        \ENDFOR
    \end{algorithmic}
\end{algorithm}

\begin{algorithm}[p]
    \caption{Estimating the Sample Complexity of Proximal Policy Optimization (PPO)}
    \label{alg:ppo}
    \begin{algorithmic}
        \STATE {\bfseries Input:} environment $\mathcal{E}$ (with a horizon of $200$ for \texttt{BeamManagement6G}), actor approximator $\Pi$, critic approximator $V$,
        \STATE {\bfseries Input:} training runs $N_{\text{seed}}$ (defaults to $100$), epochs to train for $N_{\text{epoch}}$ (defaults to $500$),
        \STATE {\bfseries Input:} number of training environments $N_{\text{env}}$ (defaults to $10$), validation environments $N_{\text{val}}$ (defaults to $100$)
        \STATE {\bfseries Output:} sample complexity of PPO on environment $\mathcal{E}$ for threshold probabilities $\delta_0, \dots$ and error thresholds $\varepsilon_0, \dots$
        \STATE
        \STATE Set up empty reward\_buffer of shape $N_{\text{seed}} \times N_{\text{epoch}}$
        \FOR{$N_{\text{seed}}$ different (random) initial parametrizations of $\Pi,V$}
            \STATE Initialize a standard PPO algorithm~\cite{schulman2017proximal,weng2022tianshou} with $\Pi,V$ and hyperparameters
            \STATE $-~\varepsilon_{\text{clip}}$: gradient clipping threshold
            \STATE $-~\alpha_{C},\alpha_{Q}$: classical and (optional) quantum learning rate
            \STATE $-~\gamma$: reward discount factor
            \STATE $-~N_{\text{batch}}$: mini-batch size for update
            \STATE $-~N_{\text{repeat}}$: update repetition for each batch
            \FOR{$N_{\text{epoch}}$ training epochs}
                \STATE Use the PPO algorithm on $N_{\text{env}}$ parallel training environments
                \STATE Employ current policy on $N_{\text{val}}$ parallel validation environments
                \STATE Store the averaged validation result to the reward\_buffer
            \ENDFOR
        \ENDFOR
        \FOR {threshold probabilities $\delta_0, \delta_1, \dots$}
            \FOR {error thresholds $\varepsilon_0, \varepsilon_1, \dots$}
                \STATE Determine the sample complexity and respective percentiles from the reward\_buffer as described in \cref{sec:estimator}
            \ENDFOR
        \ENDFOR
    \end{algorithmic}
\end{algorithm}

% \begin{algorithm}[tb]
%    \caption{Proximal Policy Optimization (PPO)}
%    \label{alg:ppo}
% \begin{algorithmic}
%    \STATE {\bfseries Input:} environment $\mathcal{E}$, actor function approximator $\Pi$, critic function approximator $Q$, initial network parameters $\theta$, copy of parameters for target network $\theta'$ and synchronization frequency $N_s$, experience replay buffer $\mathcal{D}$ with maximum size $N_r$, rate for epsilon-greedy action selection $\varepsilon_{\text{greedy}}$, learning rate(s) $\alpha$, mini-batch size for updates $N_b$, reward discount factor $\gamma$
%    \REPEAT
%    \STATE TODO
%    \UNTIL{convergence convergence (in this paper: maximum 1000000 agent-environment interactions)}
% \end{algorithmic}
% \end{algorithm}

\subsection{\label{subapp:hyperparameter}Hyperparameter Optimization}
To evaluate and compare the sample efficiency, it is important to optimize the hyperparameters of the involved algorithms. Only then statements can be made  on the suitability and performance for specific tasks. While we focus on algorithmic hyperparameters in this section, an ablation study for different underlying models is performed in \cref{app:model}. While it is impossible to consider every degree of freedom with fine granularity, we determine the hyperparameters that have the largest impact on the overall performance -- and in particular sample complexity.

For both the DDQN and PPO algorithm with classical and quantum function approximators the results are reported in \cref{tab:hyperparameter}. The bold values were determined to be optimal by using grid-search with $20$ runs for each configuration. These are the settings that have been used to produce the results in the rest of this paper. While this does not proof that the respective models produce the best sample efficiency possible, we believe that the performed comprehensive hyperparamter optimization allows for statements with high certainty.

\begin{table*}[tbph]
    \centering
        \begin{tabular}{cc|c|c}
            \toprule
            & & \multicolumn{1}{c|}{\textbf{Classical}} & \multicolumn{1}{c}{\textbf{Quantum}} \\
            \midrule
            \multirow{5}{*}{\shortstack{DDQN \\ (\cref{alg:dqn})}} & action selection $\varepsilon_{\text{greedy}}$ & $0.05$,~$\mathbf{0.1}$,~$0.2$ & $0.05$,~$\mathbf{0.1}$,~$0.2$ \\
            & synchronization rate $N_{\text{sync}}$ & $250$,~$\mathbf{1000}$,~$2000$,~$4000$ & not separately optimized \\
            & replay buffer size $N_{\text{buffer}}$ & $\mathbf{1000}$,~$10000$ & not separately optimized \\
            & classical learning rate $\alpha_C$ & $0.01$,~$0.001$,~$\mathbf{0.0005}$,~$0.0002$,~$0.0001$ & $0.001$,~$\mathbf{0.0005}$,~$0.0002$,~$0.0001$ \\
            & quantum learning rate $\alpha_Q$ & N/A & $0.002$,~$\mathbf{0.001}$,~$0.0005$,~$0.0002$ \\
            \midrule
            \multirow{4}{*}{\shortstack{PPO \\ (\cref{alg:ppo})}} & gradient clipping $\varepsilon_{\text{clip}}$ & $0.05$,~$\mathbf{0.1}$,~$0.2$,~$0.4$ & $0.05$,~$0.1$,~$\mathbf{0.2}$,~$0.4$ \\
            & update repetition $N_{\text{repeat}}$ & $1$,~$5$,~$\mathbf{10}$,~$50$,~$100$ & not separately optimized \\
            & classical learning rate $\alpha_C$ & $0.002$,~$\mathbf{0.001}$,~$0.0005$,~$0.0002$,~$0.0001$ & $0.002$,~$\mathbf{0.001}$,~$0.0005$ \\
            & quantum learning rate $\alpha_Q$ & N/A & $0.002$,~$\mathbf{0.001}$,~$0.0005$ \\
            \midrule
            \multirow{3}{*}{Shared} & discount factor $\gamma$ & $0.90$,~$\mathbf{0.95}$,~$0.99$ & not separately optimized \\
             & mini-batch size $N_{\text{batch}}$ & $32$,~$\mathbf{64}$ & not separately optimized \\
            & activation function\footnotesize{$^{\bm{\dagger}}$} & None, \textbf{ReLU}, Tanh & \textbf{None}, ReLU, Tanh \\
            \bottomrule
        \end{tabular}\\
        \footnotesize{$^{\bm{\dagger}}$refers to the type of activation function used in classical neural networks between each layer; for hybrid classical-quantum networks, for a sketch see \cref{fig:qnn}, placement is between input layer and quantum circuit, as well as between measurement and output layer.}
        \caption{\label{tab:hyperparameter}Hyperparameter optimization for the DDQN and PPO algorithm with underlying classical and quantum models. For each hyperparameter, we denote the considered values and the found optimal setup. For guaranteeing robust results, we performed full grid search with $20$ seeds for each configuration. Due to the large simulation overhead, for the quantum models some hyperparameters were not separately optimized but taken over from the classical results.}
\end{table*}

\subsection{\label{subapp:ppo}Sampling Complexity of Proximal Policy 
Optimization (PPO)}

Most experiments in this paper were conducted using the DDQN algorithm~\cite{van2016deep}, which can be considered an off-policy approach. This means that the agent can learn from actions that were not produced by the current policy. In DDQN this is typically realized by an experience replay buffer that allows to re-use past experience~\cite{mnih2015human}. It is straightforward to see that such re-use of information can be expected to reduce the sample complexity. On the other hand, PPO~\cite{schulman2017proximal} is an on-policy algorithm. Such approaches can only learn from actions that were taken during the current training epoch. While this allows for more stable learning in some contexts, as well as other advantages~\cite{sutton2018reinforcement}, it does prevent the use of past information. Consequently, the sample efficiency can be expected to be tentatively lower compared to DDQN. However, it is still possible to quantitatively compare PPO with different underlying models.

In \cref{fig:ppo}, we report the sample complexities for three different classical PPO models and one quantum PPO model on the \texttt{BeamManagement6G} environment. To allow for convergence, we increased the maximum number of interactions to $500$ (epochs) $\cdot~200$ (steps) $\cdot~10$ (batch) $=1000000$. Comparison with \cref{fig:main} validate the above assumption that PPO is less sample efficient than DDQN. However, for most $\varepsilon$-$\delta$-configurations we observe superior performance of the small quantum model to the small classical model (width 16), on par performance with the medium classical model (width 32), and only slightly inferior performance to the large classical model (width 64). This is in line with the other observations in this work, i.e.\ that quantum models can achieve a sample complexity competitive with much larger classical models.

\begin{figure*}[htbp]
    \centering
    \includegraphics{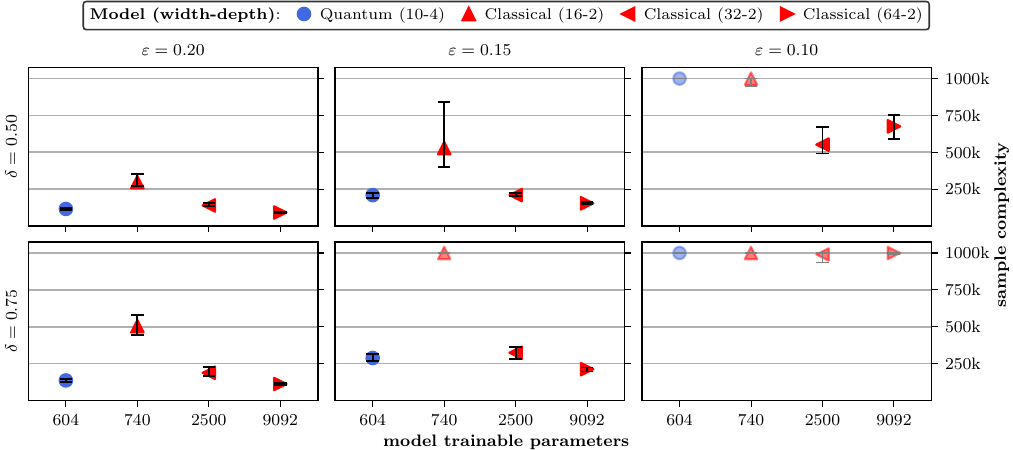}
    \caption{\label{fig:ppo}Empirical sample complexities $\hat{S}$ of proximal policy optimization (PPO) for various relative errors $\varepsilon$ and threshold probabilities $\delta$ on the \texttt{BeamManagement6G} environment. The model width denotes the number of neurons in classical hidden layers, and the number of qubits in quantum models, respectively. Moreover, model depth refers to the number of hidden layers in \glspl{dnn}, and number of ansatz repetitions in \glspl{vqc}. Compared to \cref{fig:main} which shows the same study but for DDQN, the larger number of trainable parameter originates from the use of separate actor and critic networks. All results are averaged over $100$ seeds and error bars denote the $5$th and $95$th percentiles, estimated with cluster resampling. Configurations where multiple runs do not achieve the targeted relative error rate before the cut-off of one million steps are grayed out.}
\end{figure*}

\section{\label{app:model}Analysis of Classical and Quantum Function Approximators}

This appendix complements the discussion of the algorithmic setup from \cref{app:algorithm} with considerations regarding the underlying models. In our work, we distinguished between two classes of function approximators -- both for the value function in DDQN and the policy in PPO: On the one hand, classical fully-connected neural networks with varying \emph{width} and \emph{depth} of the hidden layers. On the other hand, hybrid classical-quantum networks, i.e.\ a variational quantum circuit with varying \emph{qubit} and \emph{layer} counts, encased by single-layer classical networks (sketch see \cref{fig:qnn}).

In \cref{subapp:quantum}, we describe the hybrid model in more detail and discuss different ansätze for the underlying variational quantum circuit. An ablation study we conducted w.r.t. model complexity of both, classical and quantum approaches, is outlined in \cref{subapp:ablation}.

\subsection{\label{subapp:quantum}Notes on Quantum Circuit Ansatz}

The choice of the quantum circuit ansatz is a frequently debated topic in the quantum computing community. It is possible to compare different choices based on measures such as expressibility, trainability, and entanglement capability~\cite{sim2019expressibility,abbas2021power}. There also exist approaches for automatically generating architectures that are optimal w.r.t.\ some of these properties~\cite{du2022quantum}. However, apart from some artificial examples, relating these metrics to task-specific performance has so far been unsuccessful.

We decided for a hybrid instead of a pure quantum model which avoids the dependency of the qubit number on the dimensionality of the \gls{rl} state. This allows the flexibility of arbitrarily scaling the \gls{vqc} -- for a sketch see \cref{fig:qnn}. More formally, we use an initial single-layer fully connected network to map the dimension of the state observation $\mathbf{s} := \mathbf{s}^{(0)}$ to a vector compatible with the input dimension of an $n$-qubit \gls{vqc}:
\begin{align}
    \label{eq:encoding_layer}
    \mathbf{s}^{(1)} = \mathbf{w}_{\mathrm{pre}}^t \cdot \mathbf{s}^{(0)} + \mathbf{b}_{\mathrm{pre}}
\end{align}
This intermediate state $\mathbf{s}^{(1)}$ is encoded into the \gls{vqc} with a parameterized unitary $U(\mathbf{s}^{(1)};\Theta)$, with details on the trainable parameters $\Theta$ and encoding procedure described below. The respective quantum state is evolved and the individual qubits are measured in the Pauli-Z basis to get the intermediate state $\mathbf{s}^{(2)}$:
\begin{align}
    \mathbf{s}^{(2)} = \begin{bmatrix}
         \expval{ 0 | U(\bm{s}^{(1)};\Theta)^{\dagger} \left( Z \otimes I^{\otimes n-1} \right) U(\bm{s}^{(1)};\Theta) | 0 }  \\
         \vdots \\
         \expval{ 0 | U(\bm{s}^{(1)};\Theta)^{\dagger} \left( I^{\otimes n-1} \otimes Z \right) U(\bm{s}^{(1)};\Theta) | 0 }
    \end{bmatrix}
\end{align}
The intermediate result is post-processed using another single-layer classical neural network. This adjusts the dimensionality to the desired output dimension, usually the number of actions, i.e. antennas in the \texttt{BeamManagement6G} environment:
\begin{align}
    \mathbf{s}^{(3)} = \mathbf{w}_{\mathrm{post}}^t \cdot \mathbf{s}^{(2)} + \mathbf{b}_{\mathrm{post}}
\end{align}
For the DDQN algorithm, this output $\mathbf{s}^{(3)}$ is directly used as an approximation for the Q-value function. In case of PPO, we consecutively append a softmax layer to get a probability density function approximating the policy. While the hybrid model incorporates both, trainable classical and quantum parameters, our ablation study on model complexity in \cref{subapp:ablation} demonstrates that the performance of these models can be mainly attributed to the quantum part.

\begin{figure}[htbp]
    \centering
    \subfigure[\label{subfig:ansatz_cx}``Ent-C'': CX entanglement]{
       \includegraphics[width=0.26\textwidth]{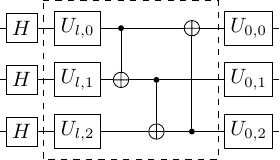}
    } 
    \quad
    \subfigure[\label{subfig:ansatz_cz}``Ent-CZ'': CZ entanglement]{
       \includegraphics[width=0.26\textwidth]{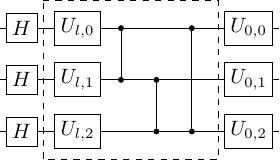}
    }
    \quad
    \subfigure[\label{subfig:ansatz_iqp}``IQP'': instantaneous quantum polynomial ansatz]{
       \includegraphics[width=0.39\textwidth]{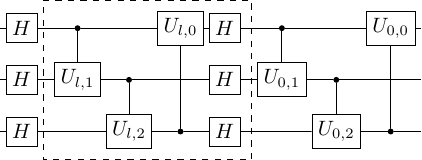}
    }
    \caption{\label{fig:ansatz}Three different hardware-efficient ansatz structures compared for this paper. All configurations initially create a uniform superposition, followed by potentially multiple layers of variational gates and two-qubit entangling gates. The different instances are realized with: (a) variational single-qubit unitaries together with nearest-neighbor controlled-X gates; (b) variational single-qubit unitaries together with nearest-neighbor controlled-Z gates; (c) controlled variational unitaries in a nearest-neighbor structure, followed by Hadamard gates; In the notation $U_{l,q}$, the index $l$ denotes the layer and $q$ the qubit position. Consequently, for two and more layers we employ data re-uploading~\cite{perez2020data}, enhancing the expressivity of the ansatz. The variational unitaries are realized with one of the parameterizations of a universal single-qubit rotation shown below in \cref{eq:gate_rot,eq:gate_xyz,eq:gate_sqr}.}
\end{figure}

For studying the measure of sample complexity, we identified the most suitable ansatz out of $9$ architectures that are commonly used by the \gls{qml} community. More concretely, we consider $3$ different hardware-efficient circuit structures, all featuring at most two-qubit interactions~\cite{kandala2017hardware}. This includes (a) a nearest-neighbor entanglement structure with controlled-X gates and single-qubit rotations, (b) a nearest-neighbor entanglement structure with controlled-Z gates and single-qubit rotations, and (c) the so-called \emph{instantaneous quantum polynomial} ansatz~\cite{shepherd2009temporally} with controlled rotations interleaved between Hadamard gates. A visualization of these configurations can be found in \cref{fig:ansatz}. In general, the single-qubit unitary $U_{l,q}(\mathbf{s};\Theta)$ acts on qubit $q$ in layer $l$. We realize data encoding in the general form as
\begin{align}
    \label{eq:gate}
    U_{l,q}(\mathbf{s};\mathbf{\theta},\mathbf{\lambda}) &= U(\lambda_{l,q,2} \cdot s_{q} + \theta_{l,q,2}, ~\lambda_{l,q,1} \cdot s_{q} + \theta_{l,q,1}, ~\lambda_{l,q,0} \cdot s_{q} + \theta_{l,q,0}),
\end{align}
where the trainable parameters $\Theta$ are comprised of \emph{standard} variational parameters $\mathbf{\theta}$, and state scaling parameters $\mathbf{\lambda}$~\cite{jerbi2021parametrized}. Furthermore, $s_q$ denotes the $q$-th entry of the intermediate output from the classical encoding layer, see \cref{eq:encoding_layer}. In this work, we consider $3$ different parameterization of variational universal rotation gates:
\begin{align}
    \label{eq:gate_rot}
    U_{l,q}^{\text{ROT}}(\mathbf{s};\mathbf{\theta},\mathbf{\lambda}) &= R_z(\lambda_{l,q,2} \cdot s_{q} + \theta_{l,q,2}) R_y(\lambda_{l,q,1} \cdot s_{q} + \theta_{l,q,1}) R_z(\lambda_{l,q,0} \cdot s_{q} + \theta_{l,q,0}) \\
    \label{eq:gate_xyz}
    U_{l,q}^{\text{XYZ}}(\mathbf{s};\mathbf{\theta},\mathbf{\lambda}) &= R_z(\lambda_{l,q,2} \cdot s_{q} + \theta_{l,q,2}) R_y(\lambda_{l,q,1} \cdot s_{q} + \theta_{l,q,1}) R_x(\lambda_{l,q,0} \cdot s_{q} + \theta_{l,q,0}) \\
    \label{eq:gate_sqr}
    U_{l,q}^{\text{U3}}(\mathbf{s};\mathbf{\theta},\mathbf{\lambda}) &= U_3(\lambda_{l,q,2} \cdot s_{q} + \theta_{l,q,2}, ~\lambda_{l,q,1} \cdot s_{q} + \theta_{l,q,1}, ~\lambda_{l,q,0} \cdot s_{q} + \theta_{l,q,0})
\end{align}
All three expressions parametrize arbitrary single-qubit unitaries. The representation in \cref{eq:gate_rot} is typically used by \texttt{PennyLane}. The second on in \cref{eq:gate_xyz} is the subsequent rotation along all three axis. The final one in \cref{eq:gate_sqr} is the arbitrary single-qubit rotation parameterized as
\begin{align}
    U_3(\theta,\phi,\delta) &= \begin{bmatrix}
        \cos(\frac{\theta}{2}) & -e^{i \delta} \sin(\frac{\theta}{2}) \\
        e^{i \phi} \sin(\frac{\theta}{2}) & e^{i (\phi + \delta)} \cos(\frac{\theta}{2})
    \end{bmatrix}.
\end{align}
While in principle all these parameterizations can be used to approximate the same functions, we observed significant differences regarding convergence stability and simulation speed. Our results are summarized in \cref{tab:ansatz}. Overall, we identified the IQP structure with $U^{\mathrm{ROT}}$ parameterization as most suitable for optimizing the sample complexity on the \texttt{BeamManagement6G} task. Therefore, this configuration is used for all other experiments in this paper. The combination of the CX-Ent structure with $U^{\mathrm{XYZ}}$ parameterization exhibits almost equivalent performance but increased the simulation times due to not being native in \texttt{PennyLane}. As a general rule of thumb, we discovered that the IQP and CX-Ent structure are superior to CZ-Ent. Moreover, the training performance with $U^{\mathrm{ROT}}$ and $U^{\mathrm{XYZ}}$ usually was much more stable compared to $U^{\mathrm{U3}}$. We emphasize that this small study by no means should be considered an exhaustive architecture search. It might be possible to develop ansätze that are even more suitable for the considered task.
However, such extensions are out of the scope of this work.

\begin{table*}
    \centering
        \begin{tabular}{cc|l@{\hspace{6mm}}l@{\hspace{6mm}}l}
            \toprule
            & & \multicolumn{3}{c}{\textbf{structure}} \\
            & & \multicolumn{1}{c}{IQP} & \multicolumn{1}{c}{Ent-CX} & \multicolumn{1}{c}{Ent-CZ} \\
            \midrule
            \multirow{5}{*}{\textbf{gate}} & ROT & baseline throughout paper & $\ominus$ slightly worse performance & $\ominus$ significantly worse performance \\
            \cmidrule{2-5}
            & \multirow{2}{*}{XYZ} & $\ominus$ slightly worse performance  & $\oplus$ comparable performance & $\ominus$ significantly worse performance \\
            & & $\ominus$ slow simulation & $\ominus$ slow simulation & $\ominus$ slow simulation \\
            \cmidrule{2-5}
            & \multirow{2}{*}{U3} & $\ominus$ extremely slow simulation & \multirow{2}{*}{$\ominus$ unstable training curves} & $\ominus$ significantly worse performance \\
            & & $\ominus$ unstable training curves & & $\ominus$ unstable training curves \\
        \bottomrule
        \end{tabular}
        \caption{\label{tab:ansatz}Different quantum circuit ans\"atze compared to the baseline selected for the experiments in this paper. We considered combinations of the ansatz layouts in \cref{fig:ansatz} and variational gates in \cref{eq:gate_rot,eq:gate_xyz,eq:gate_sqr}. Overall we observed that the IQP and CX-entanglement layout exhibited a performance clearly superior to CZ-entanglement. Furthermore, ROT and XYZ gates performed similarly, but the former was much faster to simulate. While the final performance using U3 gates was also comparable, the training procedure was much more volatile.}
\end{table*}

\subsection{\label{subapp:ablation}Ablation Study of Model Complexity}

In the following, we will investigate the impact of model complexity on the sample complexity. For both types of models we examined two degrees of freedom: For the classical model, this incorporates the \emph{depth}, i.e.\ number of hidden layers, and the \emph{width}, i.e.\ the number of neurons in each hidden layer. With input dimension $\text{dim}_{\mathrm{in}}$ and output dimension $\text{dim}_{\mathrm{out}}$, the number of trainable parameters scales as:
\begin{align}
    \text{weight parameters}&:& \overbrace{\text{dim}_{\mathrm{in}} \cdot \text{width}}^{\text{input layer}} &~~~+& \overbrace{(\text{depth} - 1) \cdot \text{width} \cdot \text{width}}^{\text{hidden layer(s)}} &~~~+& \overbrace{\text{width} \cdot \text{dim}_{\mathrm{out}}}^{\text{output layer}} \\
    \text{bias parameters}&:& \text{width} &~~~+& (\text{depth} - 1) \cdot \text{width} &~~~+& \text{dim}_{\mathrm{out}}
\end{align}
For the standard \texttt{BeamManagement6G} environment configuration with $\text{dim}_{\mathrm{in}}=3$ (i.e.\ antenna index, received intensity, and codebook element of previous timestep) and $\text{dim}_{\mathrm{out}}=3$ (i.e.\ $3$ antennas) this simplifies to 
\begin{align}
    \label{eq:parameters_classical}
    (\text{depth} -1 ) \cdot \text{width}^2 + \left( \text{depth + 6} \right) \cdot \text{width} + 3
\end{align} overall trainable parameters, i.e.\ a linear scaling in the depth and quadratic in the width. The parameter count approximately doubles for PPO as separate actor and critic networks are used -- with the critic requiring an output size of $1$.

For the hybrid classical-quantum model, the number of classical parameters depends on the in- and output dimensionality, as well as on the number of \emph{qubits} in the \gls{vqc}. Additionally, we can increase the number of variational parameters by appending additional \emph{layers}. Therefore, the number of trainable parameters scales as:
\begin{align}
    \text{(classical) weight parameters}&:& \overbrace{\text{dim}_{\mathrm{in}} \cdot \text{qubits}}^{\text{input layer}} &~~~+& 0 &~~~+& \overbrace{\text{qubits} \cdot \text{dim}_{\mathrm{out}}}^{\text{output layer}} \\
    \text{(classical) bias parameters}&:& \text{qubits} &~~~+& 0 &~~~+& \text{dim}_{\mathrm{out}} \\
    \text{(quantum) variational parameters}&:& 0 &~~~+& \text{layers} \cdot \text{qubits} \cdot 3 &~~~+& 0 \\
    \text{(quantum) scaling parameters}&:& 0 &~~~+& \underbrace{\text{layers} \cdot \text{qubits} \cdot 3}_{\text{quantum circuit}} &~~~+& 0
\end{align}
For the standard environment configuration with this simplifies to 
\begin{align}
    \label{eq:parameters_quantum}
    \underbrace{6 \cdot \text{layers} \cdot \text{qubits}}_{\text{quantum parameters}} + \underbrace{7 \cdot \text{qubits} + 3}_{\text{classical parameters}}
\end{align} overall trainable parameters. One can see from this expression that for increasing number of layers, the fraction of trainable classical parameters becomes insignificant in comparison to the parameters count of the quantum circuit. This ensures, that the performance of the model originates from the quantum part, while only pre- and post-processing is handled classically. Moreover, further below we show that small purely classical networks are not capable of learning a meaningful strategy in the \texttt{BeamManagement6G} environment.

\begin{figure*}[thbp]
    \centering
    \subfigure[\label{subfig:models_classical}Classical neural networks with: (upper plot) increasing width of hidden layers for depth $2$; (lower plot) increasing depth of hidden layers for widths $32$ and $64$.]{
       \includegraphics{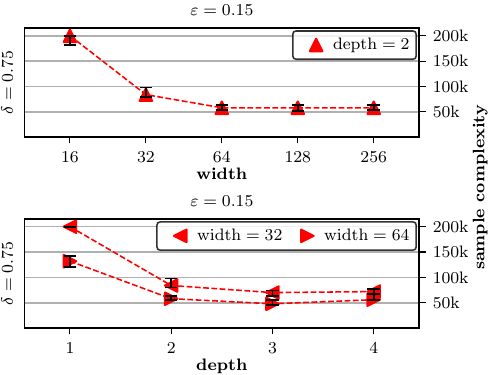}
    }
    \hspace{0.04in}
    \subfigure[\label{subfig:models_quantum}Hybrid classical-quantum networks with: (upper plot) increasing number of qubits for $4$ variational layers; (lower plot) increasing number of variational layers for $8$ and $10$ qubits.]{
       \includegraphics{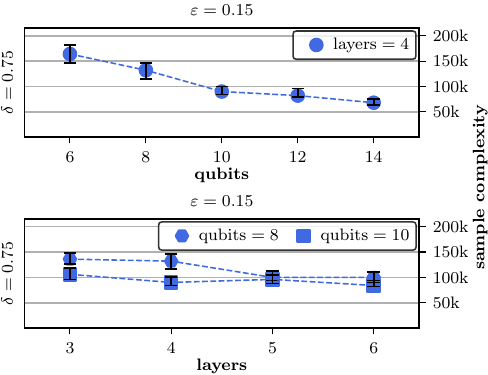}
    }    
    \caption{\label{fig:models}Impact of model size on the sample complexity in the \texttt{BeamManagement6G} environment with the DDQN algorithm. For simplicity, we show only a single intermediate configuration of threshold probability $\delta=0.85$ and error threshold $\varepsilon=0.15$, but similar results were observed for other setups. In (a) we depict the scaling behavior of a classical model with increasing width and depth, (b) refers to a quantum model with increasing qubit count and number of layers. The concrete parameter counts can be determined following \cref{eq:parameters_classical,eq:parameters_quantum}. All results are averaged over $100$ seeds, and error bars denote the $5$th and $95$th percentiles, estimated with $1000$ repetitions of cluster bootstrapping.}
\end{figure*}

We analyze the impact of model complexity, in terms of the trainable parameter count, on the sample complexity in \cref{fig:models}. All experiments were conducted on the \texttt{BeamManagement6G} environment from \cref{fig:environment}, using the DDQN algorithm. We report results for an in-between configuration of threshold probability $\delta = 0.85$ and error threshold $\varepsilon = 0.15$.

For the classical models in \cref{subfig:models_classical}, a clear performance saturation with model complexity can be observed. When the number of hidden layers is fixed to $2$, the best sample complexity is achieved with a hidden layer width of $64$ ($4611$ parameters). The performance with a width of $128$ ($17411$ parameters) and $256$ ($67587$ parameters) is nearly equivalent but no improvement could be observed.
Consequently, there should be a sweet-spot model size for optimizing the task-specific sample complexity. A similar observation is made for increasing model depth, where close-to-optimal results are achieved for depth $2$. Increasing this to depth $3$ brings insignificant performance improvements, but significantly increases the parameter count -- e.g.\ from $4611$ to $8771$ for width $64$. Therefore, we selected a depth of $2$ for the experiments in this work.

The hybrid classical-quantum models in \cref{subfig:models_quantum} do not exhibit a comparable saturation behavior for the considered model sizes. With $4$ layers, the sample complexity reduces for qubit counts from $6$ up to $14$. Moreover, the model size in term of parameters only grows slowly for these instances, i.e.\ from $189$ to $437$. However, at this point we are faced with a current technical bottleneck: While current quantum hardware is not robust enough to run the quantum models with high enough fidelity, classical simulation costs increase exponentially with qubit count. The simulation time increased additionally by a large factor, as we have to execute $100$ full training runs for a single data point. In \cref{tab:runtimes} we summarize runtimes that can typically be expected for training the different models in simulation. While the results suggest that it is possible to reduce the sample complexity even further by increasing the qubit count, experimental validation is currently infeasible. However, already these comparatively small quantum models are competitive with much larger classical models. This highlights the promising potential of scaling the quantum approaches, once the hardware development has caught up. Increasing the number of layers seems to have a less significant impact on the overall performance, but still some gains might be possible there.
For the experiments in this work, we selected a moderate size of $4$ layers.

\begin{table*}
    \centering
        \begin{tabular}{ccc|ccccc}
            \toprule
            & & &\multicolumn{5}{c}{\textbf{qubits}} \\
            & & & $6$ & $8$ & $10$ & $12$ & $14$  \\
            \midrule
            \multirow{8}{*}{\textbf{layers}} & \multirow{2}{*}{$3$} & total runtime [min:sec] & & $00\mathrm{:}48$ & $01\mathrm{:}19$ & & \\
            & & thereof for train $\mid$ test & & $00\mathrm{:}23 \mid 00\mathrm{:}19$ & $00\mathrm{:}44 \mid 00\mathrm{:}26$ & & \\
            \cmidrule{2-8}
            & \multirow{2}{*}{$4$} & total runtime [min:sec] & $00\mathrm{:}44$ & $00\mathrm{:}59$ & $01\mathrm{:}35$ & $04\mathrm{:}10$ & $12\mathrm{:}41$ \\
            & & thereof for train $\mid$ test & $00\mathrm{:}20 \mid 00\mathrm{:}18$ & $00\mathrm{:}30 \mid 00\mathrm{:}21$ & $00\mathrm{:}55 \mid 00\mathrm{:}29$ & $02\mathrm{:}46 \mid 01\mathrm{:}07$ & $09\mathrm{:}21 \mid 02\mathrm{:}48$ \\
            \cmidrule{2-8}
            & \multirow{2}{*}{$5$} & total runtime [min:sec] & & $01\mathrm{:}09$ & $01\mathrm{:}57$ & & \\
            & & thereof for train $\mid$ test & & $00\mathrm{:}37 \mid 00\mathrm{:}23$ & $01\mathrm{:}10 \mid 00\mathrm{:}33$ & & \\
            \cmidrule{2-8}
            & \multirow{2}{*}{$6$} & total runtime [min:sec] & & $01\mathrm{:}57$ & $02\mathrm{:}17$ & & \\
            & & thereof for train $\mid$ test & & $01\mathrm{:}10 \mid 00\mathrm{:}33$ & $01\mathrm{:}24 \mid 00\mathrm{:}37$ & & \\
            \bottomrule
        \end{tabular}
        \caption{\label{tab:runtimes}Expected runtimes for simulating one epoch of DDQN training with hybrid classical-quantum models. All times refer to single-core performance on an Intel Xeon E3-1240 v6 CPU. Keep in mind that for the results in this work we typically trained for $100$ epochs and averaged the performance over $100$ runs. We report the \emph{total} end-to-end times, as well as two other values: the time required for actual \emph{train}ing, mostly for computing the gradients -- realized using the \texttt{backprop} method from \texttt{PennyLane}; the time for intermediate testing, which is necessary for subsequent estimation of the sample complexity; For the classical models of the considered sizes, the time was approximately constant at only $15$ seconds per episode.}
\end{table*}

Overall, we conclude that the classical baseline with width $64$ and depth $2$ is the sweet-spot model size w.r.t.\ sample efficiency. For the quantum model the best performance was achieved with $14$ qubits and $4$ layers, but it is reasonable to assume that further improvements are possible. For saving computational resources, for most experiments we reduced the qubit count to $10$. Therefore, the results in this work could be interpreted as comparing an close-to-optimal classical model to an only partially optimized quantum model with potential for future improvement.

\begin{figure*}[htbp]
    \centering
    \subfigure[\label{subfig:degree}Sample complexity of the standard \texttt{BeamManagement6G} environment with increasing trajectory degree, examples see \cref{fig:trajectory}.]{
       \includegraphics{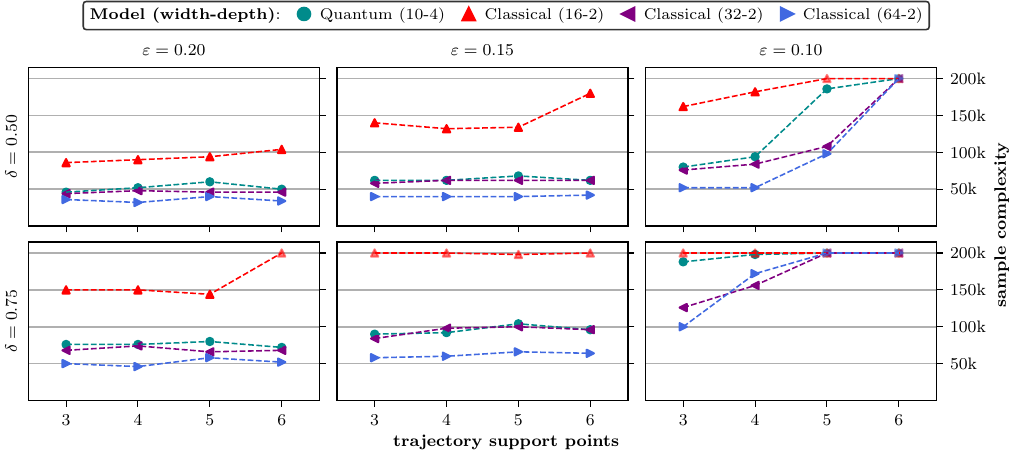}
    }
    \\
    \subfigure[\label{subfig:antenna}Sample complexity of random \texttt{BeamManagement6G} environments with increasing number of antennas. The dashed lines depict the average over $5$ environments for each antenna count, for plots of the used instances see \cref{fig:configuration}.]{
       \includegraphics{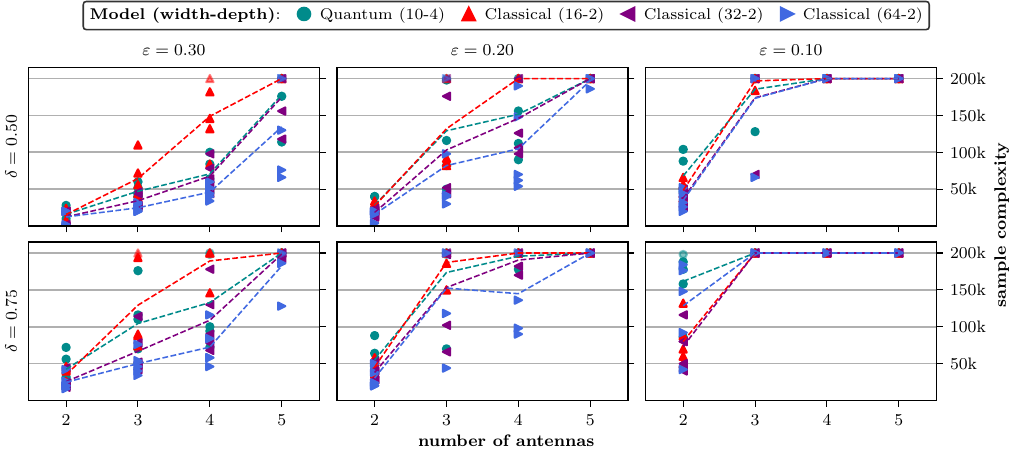}
    }    
    \caption{\label{fig:complexity_full}Task complexity of different variants of the \texttt{BeamManagement6G} environment in terms of the sample complexity exhibited by various classical and a hybrid DDQN algorithms: (a) depicts this scaling behavior for instances of the standard $3$-antenna environment from \cref{fig:environment}; (b) shows the sample complexity over $5$ instances with $2$, $3$, $4$, and $5$ antennas each, with a trajectory degree of $3$; The model width denotes the number of neurons in classical hidden layers, and the number of qubits in quantum models, respectively. Moreover, model depth refers to the number of hidden layers in \glspl{dnn}, and number of ansatz repetitions in \glspl{vqc}. The values are averaged over these instances to account for the randomness in antenna placement. All data points are calculated from $100$ runs as before.}
\end{figure*}

\section{\label{app:task}Task Complexity of \texttt{BeamManagement6G} Environments}

In this appendix, we discuss two different ways to adjust the difficulty of the developed \texttt{BeamManagement6G} environment. So far, we conducted our experiments on the environment configuration in \cref{fig:environment}, which contains $3$ antennas. Furthermore, the trajectories were sampled using $3$ support points, we also refer to this setup as trajectories of \emph{degree} $3$. In the following, we will vary both these setup parameters and observe the change in task complexity. We interpret the averaged sample complexity, exhibited across various models, as task complexity.
While this might not quantitatively capture the ground-truth difficulty across all possible solution methods, it suffices for a qualitative comparison. The results are summarized in \cref{fig:complexity_full}.

First, we use the standard environment configuration but increase the trajectory degree up to an value of $6$. Intuitively, this leads to more complicated and unpredictable trajectories, which should complicate the task for the \gls{rl} agent. Examples of trajectories with varying number of support points can be found in \cref{fig:trajectory}. As expected, in \cref{subfig:degree} we can observe an increase of sample complexity over all model instances. While this behavior is less visible for larger threshold probabilities $\delta$ and larger error thresholds $\varepsilon$, for decreasing values of both the behavior get significant. Moreover, this replicates the behavior from the rest of this work, that the quantum model outperforms the similar-sized classical model, i.e.\ width $16$ and depth $2$, on all instances. Furthermore, for most $\varepsilon$-$\delta$-configurations the performance closely matches that of the larger classical models. Note, that this is only the $10$-qubit quantum model, i.e.\ one can expect performance improvement for $14$ qubits, especially for the degree $5$ setup. However, due to the large overhead of generating the raw results for this setting, such considerations are too computationally expensive. For a trajectory degree of $6$, the movement of the UEs becomes too unpredictable for all models to allow for informed antenna selection. Looking at samples of such trajectories in \cref{fig:trajectory} also suggests that a degree of at most $5$ should be used to model human behavior. Overall, it is reasonable to claim that with increasing trajectory degree also the underlying task gets more difficult.

Second, we modify the actual placement and orientation of the antennas in \cref{subfig:antenna}. In all instances the trajectory degree is set to $3$. We sample random instances for $2$, $3$, $4$, and $5$ antenna positions from $[0.0, 6.0] \times [0.0, 6.0]$ as explained in \cref{sec:Antenn_configuration}, i.e.\ by enforcing an Euclidean distance of at least $1.5$ between pairs of antennas. The direction is assigned uniformly at random. Some of the resulting configurations can be found in \cref{fig:configuration}, all $5$ for each antenna count are available in the GitHub repository. As the environments seem to become significantly more difficult with increasing antenna count (compared to increasing the trajectory degree) we relaxed the error thresholds $\varepsilon$. As there is a lot of randomness involved for the actual environment setup, we averaged the sample complexity over all $5$ instances, in order to get a more robust estimate. For all models, one can see a clear increase in sample complexity with increasing antenna number. Similar to above, the quantum model exhibits a competitive performance for most $\varepsilon$-$\delta$-configurations. Only for $\delta=0.75$ and $\varepsilon=0.10$ the performance on the $2$-antenna environments is inferior to the similar-sized classical model. However, for this setup also the usually best-performing classical model with a width of $64$ seems to struggle. Moreover, also for this setting we could only simulate sufficient runs for the $10$-qubit model. For $5$ antennas, all but the large classical model on one environment instance fail to reach the desired quality threshold. This is not too surprising, given the complex interference patterns in \cref{fig:configuration}. To improve upon this, one might need to resort to larger models, or provide additional information to the agent. This can e.g.\ be done by stacking multiple of the past \gls{rl} states. However, as the reported results are sufficient to see a clear trend w.r.t.\ task complexity, this is out of the scope of this work. To summarize, there is a clear correlation between the number of antennas placed in the environment and the resulting task complexity.

As discussed, both the trajectory degree, and the number of antennas can be used to adjust the difficulty of the \texttt{BeamManagement6G} environment. Furthermore, the clear real-world inspiration and the sound physical dynamics enhance the practical relevance. Overall, we are confident that the \texttt{BeamManagement6G} environment lends itself as a sophisticated benchmark for \gls{qrl} with the possibility to create instances of increasing complexity.

% You can have as much text here as you want. The main body must be at most $8$ pages long.
% For the final version, one more page can be added.
% If you want, you can use an appendix like this one, even using the one-column format.
%%%%%%%%%%%%%%%%%%%%%%%%%%%%%%%%%%%%%%%%%%%%%%%%%%%%%%%%%%%%%%%%%%%%%%%%%%%%%%%
%%%%%%%%%%%%%%%%%%%%%%%%%%%%%%%%%%%%%%%%%%%%%%%%%%%%%%%%%%%%%%%%%%%%%%%%%%%%%%%

\end{document}

% This document was modified from the file originally made available by
% Pat Langley and Andrea Danyluk for ICML-2K. This version was created
% by Iain Murray in 2018, and modified by Alexandre Bouchard in
% 2019 and 2021 and by Csaba Szepesvari, Gang Niu and Sivan Sabato in 2022. 
% Previous contributors include Dan Roy, Lise Getoor and Tobias
% Scheffer, which was slightly modified from the 2010 version by
% Thorsten Joachims & Johannes Fuernkranz, slightly modified from the
% 2009 version by Kiri Wagstaff and Sam Roweis's 2008 version, which is
% slightly modified from Prasad Tadepalli's 2007 version which is a
% lightly changed version of the previous year's version by Andrew
% Moore, which was in turn edited from those of Kristian Kersting and
% Codrina Lauth. Alex Smola contributed to the algorithmic style files.